\providecommand{\citeyear}[1]{\cite{#1}}
\newcommand{\cOne}{\ensuremath{\mathcal{C}^1}}
\newcommand{\tuple}[1]{\langle{#1}\rangle}
\definecolor{mygray}{RGB}{150,150,150}
\newcommand{\Luka}{{\L}ukasiewicz}
\newcommand{\LIP}{\Luka{} Infinitely-valued Probabilistic}
\newcommand{\luka}{\ensuremath{\textrm{\L}_\infty}}
\renewcommand{\phi}{\varphi}
\renewcommand{\emptyset}{\varnothing}
\title{Quantitative Logic Reasoning}
\author{Marcelo Finger\\
		Department of Computer Science\\ 
		University of S\~ao Paulo, Brazil\\	 
		\url{mfinger@ime.usp.br}
}
\begin{document}

\maketitle

\begin{abstract}
	In this paper we show several similarities among logic systems that deal simultaneously with deductive and quantitative inference. We claim it is appropriate to call the tasks those systems perform as Quantitative Logic Reasoning.  Analogous properties hold throughout that class, for whose members there exists a set of linear algebraic techniques applicable in the study of satisfiability decision problems. In this presentation, we consider as Quantitative Logic Reasoning the tasks performed by propositional Probabilistic Logic; first-order logic with counting quantifiers over a fragment containing unary and limited binary predicates; and propositional \LIP{} Logic.
\end{abstract}

\section{Introduction}
\label{sec:intro}

Quantitative Logic Reasoning aims at providing a unified treatment to several tasks that involve both a deductive logic reasoning and some form of inference about quantities.  Typically, reasoning with quantities involves probabilities and/or cardinality assessments.  Superficially, we are dealing with such distinct quantitative inferential capabilities but it is our aim to clarify that, to some significant extent, these approaches share a considerable set of common features, which include, but are not restricted to:
\begin{itemize}
	\item similar reasoning tasks with quantities, which typically involve decision problems such as satisfiability or entailment assessments;
	\item similarly structured fragments that lead to the existence of normal forms;
	\item similar characterizations of consistency in terms of coherence;
	\item similar formulations based on Linear Algebra;
	\item similar decision algorithms employing SAT-based column generation;
	\item similar complexity of decision problems, which for the fragments covered in this work are ``only'' NP-complete.
\end{itemize}

We believe that the presence of such similarities elicits the grouping of several logic systems under the name of Quantitative Logic Reasoning systems.

We explore the shared properties of three logic systems with the aim of bringing forward the similarities as well as the particularities of each system.  For that, we present some well known results, which are employed as a basis for the development of quantitative reasoning techniques; we also present original results, mainly in dealing with counting quantifiers over unary and restricted binary predicates; and in the normal form and linear algebraic methods for \LIP{} Logic.  But the main claim of originality lies in bringing forward the similarities of all those systems.

The following logic systems are studied in detail.
\begin{itemize}
	\item Probabilistic Logic (PL).  It consists of classical propositional logic enhanced with probability assignments over formulas, presented in Section~\ref{sec:pl}.
	\item Counting Quantifiers over a first order fragment containing unary predicates; we show that such a fragment can be extended with binary predicates in restricted contexts without a complexity blow up.  The CQU and CQUEL logics are presented in Section~\ref{sec:cqu}.
	\item \LIP{} Logic (LIP), a multi-valued logic for which there exists a well-founded probability theory, presented in Section~\ref{sec:lukaAndProb}.
\end{itemize}

For each system above, we present language, semantics and decision problem, followed by  normal form presentation and satisfiability characterization We also present complexity results and decision algorithms.

It is important to note that throughout this work those logics and their decision problems are presented syntactically, and formulas are linguistic objects, presented as a context-free grammar or some similar, recursive, device. The syntactic vocabulary contains, at the level of terminals, a set of basic (propositional) symbols $\mathcal{P}$, a set of connectives with appropriate arity and punctuation symbols.

\section{Probabilistic Logic}
\label{sec:pl}

Probabilistic logic combines classical propositional inference with classical (discrete) probability theory. The original formulation of such a blend of logic and probability is due to George Boole who, in his seminal work introducing what is now known as Boolean Algebras, already dedicated the two last sections to the problem of combining logic and probability~results, stating that
\begin{quote}\it
	the object of the theory of probabilities might be thus defined. Given the probabilities of any events, of whatever kind, to find the probability 	of some other event connected with them.
	\smallskip 
	
	\hfill Boole ~\citeyear[Chapter XVI, 4, p.189]{Boo1854} 
\end{quote}

Deciding if a given set of probabilities is consistent or coherent may be seen as a first step for Boole's ``probability extension problem''. Indeed, there is certainly more than one way of computing probabilities starting from the establishment of their coherence; see~\cite{dFi2017} and also the methods presented in this work.

For the purposes of this work, we concentrate on the decision problem of probabilistic logic, the \emph{Probabilistic Satisfiability} problem (PSAT), which consists of an assignment of probabilities to a set of propositional formulas, and its solution consists of a decision on whether this assignment is satisfiable; this formulation is based on a full Boolean Algebra which, due to de Finetti's Dutch Book Theorem (see Proposition~\ref{pr:coherent} below), is equivalent to deciding the coherence criterion over a finite Boolean Algebra. The problem has been first proposed by Boole and has since been independently rediscovered several times (see~\cite{Hai86,HJ2000} for a historical account) until it was presented to the Computer Science and Artificial Intelligence community by Nilsson~\cite{Nil86} and was shown to be an NP-complete problem, even for cases where the corresponding classical satisfiability is known to be in PTIME~\cite{GKP1988}.

Boole's original formulation of the PSAT problem did not consider conditional probabilities, but extensions for them have been developed~\cite{Hai86,HJNA95,HJ2000,WPV04}; the latter two works also cover extensions of PSAT with imprecise probabilities. The complexity of the decision problems for conditional probabilities becomes PSPACE-complete if constraints can combine distinct conditional events; otherwise it remains NP-complete~\cite{FHM1990}. A few tractable fragments of PSAT were presented~\cite{AP2001}.  In this work, however, we concentrate on PSAT's original formulation, and in this section we follow the developments of \cite{FDB2011,DCF2014,DF2015,DF2015b}.

The PSAT problem is formulated in terms of a linear algebraic problem of exponential size.  The vast majority of algorithms for PSAT solving in the literature are based on linear programming techniques, such as column generation, enhanced by several types of heuristics~\cite{KP90,HJNA95,FDB2011,DF2015b}.

On the other hand, there is a distinct foundational approach to sets of probability assignment to formula known as \emph{coherent probabilities}, which are based on de Finetti's view of probabilities as betting odds~\cite{dFi1931,dFi37,dFi2017}.

In the following we present a few examples in Section~\ref{sec:expsat}, discuss the relationship between PSAT and coherent probabilities in Section~\ref{sec:coherentpsat} and present an algorithm for deciding PSAT in Section~\ref{sec:psatsolve}.

\subsection{Examples}
\label{sec:expsat}

Consider the following example.
\begin{example} \label{ex:psat1}\rm
	A doctor is studying a disease $D$ and formulates a hypothesis, according to which there are three genes involved, $g_1, g_2$ and $g_3$ such that at least two of which must be present for the disease $D$ to manifest itself.  Studies in the population of $D$-patients shows that each of the three genes is present in 60\% of the patients.
	
	The question is whether the doctor's hypothesis is consistent with the data.\qed
\end{example}

In this example, we see a hypothesis consisting of hard statements (statements with probability 1) being confronted with probabilistic data.  The consistency of the joint statement is sees as decision problem of the sort we are dealing with here.

A second example is as follows.
\begin{example}\label{ex:ant}\rm
	In an ant colony infestation, three observers have reached different conclusions.
	\begin{itemize}
		\item Observer 1 noticed that at least 75\% of the ants had mandibles or could carry pieces of leaves.
		\item Observer 2 said that at most a third of the ants had mandibles or did not display the ability to carry pieces of leaves.
		\item Observer 3 stated that at most  15\% of the ants had mandibles. 
	\end{itemize}
	The question is whether these observations are jointly consistent or not. 
	\qed
\end{example}

We now see how these examples can be formalized.

\subsection{Coherent Probabilities and Probabilistic Satisfiability}
\label{sec:coherentpsat}

A \emph{PSAT instance} is a set $\Sigma=\{P(\alpha_i) \bowtie_i p_i| 1 \leq i \leq k\}$, where $\alpha_1, \ldots, \alpha_k$ are classical propositional formulas defined on $n$ logical variables\footnote{In computational logic tradition, variables are also called (syntactical) \emph{atoms}, but to avoid confusion with the algebraic use of `atom' as the smallest nonzero element of an algebra, we use here instead the term \emph{propositional symbol}, or \emph{(atomic) proposition}.} $\mathcal{P} = \{x_1, \ldots, x_n\}$, which are restricted by probability assignments
$P(\alpha_i) \bowtie_i p_i$, where $\bowtie_i\, \in \{=, \leq, \geq\}$ and $1 \leq i \leq k$. It is usually the case that all $\bowtie_i$ are equalities, in which case the PSAT instance can be seen simply as a set of pairs $\{(\alpha_i,p_i)|| i= 1, \ldots, k\}$.

There are $2^n$ possible propositional valuations $v$ over the logical
variables, $v: \mathcal{P} \rightarrow \{0,1\}$; each such valuation is truth-functionally extended\footnote{Thus, valuations can be seen as homomorphisms of the set of formulas into the two element Boolean Algebra $\{0,1\}$.}, as usual, to all formulas, $v: \mathcal{L} \rightarrow
\{0,1\}$, and a formula $\alpha$ is \emph{valid} if every valuation satisfies it, noted as $\models \alpha$. Let $V$ be the set of all propositional valuations. 

A \emph{probability distribution over propositional valuations%
\footnote{While the presentation here stays on the syntactical level, in algebraic terms this notion can be seen as a probability measure over the free boolean algebra, in the sense of~\cite{HT1948}.  Recall that a measure on $A$ is a function $\tau: A \to [0, 1]$ which is additive for incompatibles and also satisfies $\tau(1) = 1$. When $A$ is finite, as in the case here, every $a \in A$ equals the disjunction of the atoms it dominates, so $\tau$ is uniquely determined by its value at the set of (algebraic) atoms of $A$.  For every element $a \in A$ the value of $\tau(a)$ is the sum of the values $\tau(e)$ for all atoms $e \leq a$.}%
} $\pi: V \rightarrow [0,1]$, is a function that maps every valuation to a value in the real interval $[0,1]$ such that $\sum_{i=1}^{2^n} \pi(v_i) = 1$.  The probability distribution $\pi$ can be uniquely extended over the set of all propositional formulas built from $V$.  This, the probability of a formula $\alpha$ according to distribution $\pi$ is given by $P_\pi(\alpha) = \sum \{\pi(v_i)| v_i(\alpha) = 1\}$. The following is a straightforward consequence of this definition.

\begin{lemma} \label{lemma:kolmogorov}
	The probability $P_\pi$ defined above respects Kolmogorov's basic properties of discrete probability:
	\begin{quote}
	\begin{description}
		\item[K1] $0 \leq P_\pi(\alpha) \leq 1 $
		\item[K2] If $\models \alpha$ then $P_\pi(\alpha) = 1$
		\item[K3] If $\models \lnot(\alpha \land \beta)$ then $P_\pi(\alpha \lor \beta) = P_\pi(\alpha) + P_\pi(\beta)$
	\end{description}		
	\end{quote}
\end{lemma}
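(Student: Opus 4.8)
The plan is to verify each of the three Kolmogorov properties directly from the definition $P_\pi(\alpha) = \sum\{\pi(v_i) \mid v_i(\alpha) = 1\}$, using only the defining properties of the probability distribution $\pi$, namely that $\pi(v_i) \geq 0$ for all $i$ and $\sum_{i=1}^{2^n} \pi(v_i) = 1$.

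For \textbf{K1}, I would observe that $P_\pi(\alpha)$ is a sum of a subset of the nonnegative numbers $\{\pi(v_1), \ldots, \pi(v_{2^n})\}$, hence it is at least $0$; and since it is a sub-sum of a collection whose total is $1$ (all terms nonnegative), it is at most $1$. For \textbf{K2}, if $\models \alpha$ then $v_i(\alpha) = 1$ for every valuation $v_i \in V$, so the index set $\{i \mid v_i(\alpha) = 1\}$ is all of $\{1, \ldots, 2^n\}$, giving $P_\pi(\alpha) = \sum_{i=1}^{2^n} \pi(v_i) = 1$. For \textbf{K3}, the key point is that $\models \lnot(\alpha \land \beta)$ means no valuation satisfies both $\alpha$ and $\beta$, so the index sets $A = \{i \mid v_i(\alpha) = 1\}$ and $B = \{i \mid v_i(\beta) = 1\}$ are disjoint; moreover, since $v_i$ is a Boolean homomorphism, $v_i(\alpha \lor \beta) = 1$ iff $v_i(\alpha) = 1$ or $v_i(\beta) = 1$, i.e. the index set for $\alpha \lor \beta$ is exactly $A \cup B$. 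Then $P_\pi(\alpha \lor \beta) = \sum_{i \in A \cup B} \pi(v_i) = \sum_{i \in A} \pi(v_i) + \sum_{i \in B} \pi(v_i) = P_\pi(\alpha) + P_\pi(\beta)$, where the middle equality uses disjointness of $A$ and $B$ together with finiteness of the sums.

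None of the three steps presents a genuine obstacle; this is a routine unwinding of definitions. If anything, the only point requiring a moment's care is in \textbf{K3}, where one must invoke the truth-functional (homomorphic) extension of valuations to justify that the set of valuations satisfying $\alpha \lor \beta$ decomposes as the union of those satisfying $\alpha$ and those satisfying $\beta$ — this is precisely the footnoted fact that valuations are homomorphisms into the two-element Boolean algebra. The finiteness of $V$ (it has exactly $2^n$ elements) is what lets us manipulate these sums freely without any convergence concerns.
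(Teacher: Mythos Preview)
Your proposal is correct; the paper itself does not give an explicit proof, introducing the lemma only as ``a straightforward consequence of this definition'' and leaving the verification to the reader. Your routine unwinding of the definition --- partial sums of nonnegative terms for \textbf{K1}, full sum for \textbf{K2}, disjoint union of index sets for \textbf{K3} --- is precisely the expected argument and matches the spirit of the paper's omission.
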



Nilsson~\citeyear{Nil86}'s linear algebraic formulation of PSAT considers a $k \times 2^n$ matrix $A = [a_{ij}]$ such that $a_{ij} = v_j(\alpha_i).$ The \emph{probabilistic satisfiability problem} is to decide if there is a probability vector $\pi$ of dimension $2^n$ that obeys the \emph{PSAT restriction}:
\begin{eqnarray}
	\nonumber
	A \pi &\bowtie& p\\
	\label{eq:PSATrestrictions}
	\mbox{$\sum \pi_i$} &=& 1\\
	\nonumber
	\pi &\geq& 0
\end{eqnarray}
where $\bowtie$ is a ``vector'' of comparison symbols, $\bowtie_i \in \{=, \leq, \geq\}$.

A \emph{PSAT instance} $\Sigma$ is \emph{satisfiable} iff its associated PSAT restriction~(\ref{eq:PSATrestrictions}) has a solution.  If $\pi$ is a solution to~(\ref{eq:PSATrestrictions}) we say that $\pi$ satisfies $\Sigma$.  The last two conditions of~(\ref{eq:PSATrestrictions}) force $\pi$ to be a probability distribution.  Usually the first two conditions of~(\ref{eq:PSATrestrictions}) are joined, $A$ is a $(k+1) \times 2^n$ matrix with 1's at its first line, $p_1=1$ in vector $p_{(k+1) \times 1}$, so $\bowtie_1$-relation is ``=''.

\begin{example}\label{ex:drunk1}\rm
  Consider Example~\ref{ex:psat1}. Let $x_i$ represent that gene $i$ is active in a $D$-patient.  The hypothesis that at least two genes are active in a given $D$-patient is represented by $\lnot(\lnot x_i \land \lnot x_j)$ with 100\% certainty for $i \neq j$:
  \[P(x_1 \lor x_2) = P(x_1 \lor x_3) = P(x_2 \lor x_3) = 1.\]
  The data stating that each gene occurs in 60\% of $D$-patients is given by:
  \[P(x_1) = P(x_2) = P(x_3) = 0.6,\]
  and the question is if there exists a probability distribution that
  simultaneously satisfies these 6 probability assignments.
  
  Consider now Example~\ref{ex:ant}. Let $x_1$ mean that an ant has mandibles and $x_2$ mean that that
  it can carry pieces of leaves.  In this case, we obtain the restrictions $\Sigma$:
  \[P(x_1 \lor x_2) \geq 0.75~~~~~
    P(x_1 \lor \lnot x_2) \leq 1/3~~~~~
    P(x_1) \leq 0.15\]
  
  Consider a probability distribution $\pi$ and all the possible
  valuations  as follows. 
  {\small
  	\[
  	\begin{array}{ccccccccc}
  	\pi & & x_1 & & x_2 & & x_1 \lor x_2 & & x_1 \lor \lnot x_2 \\
  	0.20 & & 0   & & 0   & & 0           & & 1 \\
  	0.05 & & 1   & & 0   & & 1           & & 1 \\
  	0.70 & & 0   & & 1   & & 1           & & 0 \\
  	0.05 & & 1   & & 1   & & 1           & & 1 \\
  	\cline{1-1} \cline{3-3} \cline{5-5} \cline{7-7} \cline{9-9}
  	1.00&& 0.10 && 0.75 && 0.80 && 0.30\\
  	\end{array}
  	\]
  }%
  which jointly satisfies the assignments above, so Example~\ref{ex:ant} is satisfiable. We are going to present an algorithm to compute one such probability distribution if one exists. \qed
\end{example}

On the other hand, de Finetti's approach aims at defining a ``coherent'' set of betting odds, or simply a coherent book.  Given a map from formulas to real values in $[0,1]$, $P:\{\alpha_1, \ldots,\alpha_k\} \to [0,1]$,  there is a \emph{Dutch book} against $P$ if there are $\sigma_1 , \ldots, \sigma_k \in \mathbb{R}$ such that

\[\sum_{i=1}^{k} \sigma_i ( P(\alpha_i) - v(\alpha_i) ) < 0 \textrm{ for all valuations $v$.}\]

The map is \emph{coherent} if there is no Dutch book against it.  

This can be understood as a game between  two players, Alice the bookmaker and Bob the bettor, wagging money on the occurrence of $\alpha_i$.  For each $i$, Alice states her betting odd $P(\alpha_i) = p_i \in [0,1]$ and Bob chooses a ``stake'' $\sigma_i \in \mathbb{R}$; Bob pays Alice $\sum_{i=1}^{k} \sigma_i \cdot P(\alpha_i)$ with the promise that Alice will pay back $\sum_{i=1}^{k} \sigma_i \cdot v(\alpha_i)$ if the outcome is possible world (or valuation\footnote{The notion of a ``world'', can be understood via Stone duality, whereby homomorphisms of a boolean algebra $A$ of events into the two element boolean	algebra $\{0,1\}$ are a dual counterpart of $A$, consisting of all possible evaluations of the events of $A$ into $\{0,1\}$, and can thus be identified with the set of possible worlds where these events take place.}) $v$.  The chosen stake $\sigma_i$ is allowed to be negative, in which case Alice pays Bob $|\sigma_i| \cdot P(\alpha_i)$ and gets back $|\sigma_i| \cdot v(\alpha_i)$ if the world turns out to be $v$. Alice's total balance in the bet is $\sum_{i=1}^{k} \sigma_i ( P(\alpha_i) - v(\alpha_i) )$.  So there is a Dutch book against Alice if the bettor has a choice of stakes such that, for every valuation $v$, Alice looses money.
Thus an assignment is coherent if for every set of stakes a bettor chooses, there is always a possible non-negative outcome.  It turns out that coherent maps are precisely those that can be seen as satisfiable PSAT instances.

\begin{proposition}[de Finetti~\citeyear{dFi1931,dFi37,dFi2017}]\label{pr:coherent}
	Given a  map from formulas to real values in $[0,1]$, $P:\{\alpha_1, \ldots,\alpha_k\} \to [0,1]$, the following are equivalent:
	\begin{enumerate}[(a)]
		\item $P$ is a coherent book.
		\item The probability assignment $\Sigma=\{(\alpha_i,P(\alpha_i))~|~i=1,\ldots,k\}$ is a satisfiable PSAT instance.
	\end{enumerate}
\end{proposition}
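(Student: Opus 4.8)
The plan is to recast both conditions as statements about the geometry of a finite point set in $\mathbb{R}^k$ and then invoke a strict separating hyperplane argument. Enumerate the valuations as $V=\{v_1,\ldots,v_{2^n}\}$ and, for each $v_j$, form the column vector $a_j=(v_j(\alpha_1),\ldots,v_j(\alpha_k))^T\in\{0,1\}^k$; let $p=(p_1,\ldots,p_k)^T$ with $p_i=P(\alpha_i)$. Writing $A$ for the $k\times 2^n$ matrix with columns $a_j$ (the matrix of Nilsson's formulation), condition (b) says exactly that the PSAT restriction $A\pi=p$, $\mathbf{1}^T\pi=1$, $\pi\geq 0$ of~(\ref{eq:PSATrestrictions}) has a solution --- equivalently, that $p$ lies in the convex hull $C:=\mathrm{conv}\{a_1,\ldots,a_{2^n}\}$. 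Condition (a), unfolding the definition of a Dutch book, says that there is \emph{no} $\sigma\in\mathbb{R}^k$ with $\langle\sigma,\,p-a_j\rangle<0$ for every $j$, i.e.\ no $\sigma$ with $\langle\sigma,p\rangle<\min_j\langle\sigma,a_j\rangle$.

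For the implication (b) $\Rightarrow$ (a): if $p=\sum_j\pi_j a_j$ with $\pi$ a probability distribution, then for every $\sigma$ the scalar $\langle\sigma,p\rangle=\sum_j\pi_j\langle\sigma,a_j\rangle$ is a convex combination of the reals $\langle\sigma,a_j\rangle$, hence $\langle\sigma,p\rangle\geq\min_j\langle\sigma,a_j\rangle$. This contradicts the existence of a Dutch book, so $P$ is coherent. (Morally this direction is Lemma~\ref{lemma:kolmogorov} in disguise: a satisfiable instance is witnessed by a genuine probability $P_\pi$, which obeys K1--K3, and any map obeying the Kolmogorov properties admits no Dutch book.)

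For (a) $\Rightarrow$ (b) I argue the contrapositive. Suppose $\Sigma$ is unsatisfiable, i.e.\ $p\notin C$. Since $C$ is the convex hull of finitely many points it is compact, in particular closed and convex, so the strict separating hyperplane theorem provides $\sigma\in\mathbb{R}^k$ and $c\in\mathbb{R}$ with $\langle\sigma,p\rangle<c\leq\langle\sigma,x\rangle$ for all $x\in C$; in particular $\langle\sigma,\,p-a_j\rangle<0$ for every valuation $v_j$. Thus the stakes $\sigma_1,\ldots,\sigma_k$ constitute a Dutch book against $P$, so $P$ is not coherent. The same step can be carried out without geometry, staying inside the linear-algebraic idiom of the paper, by applying a Farkas-type theorem of the alternative to the infeasible system defining (b): its alternative system is exactly the assertion that some $\sigma$ satisfies $A^T\sigma\geq\langle\sigma,p\rangle\,\mathbf{1}$ with at least the bettor's inequality strict, which is the Dutch book condition.

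The only genuine subtlety is bookkeeping: one must use the \emph{strict} form of separation (legitimate precisely because $C$ is closed and $p\notin C$), and fix the orientation of $\sigma$ so that the separating inequality reads $\langle\sigma,\,p-a_j\rangle<0$ rather than $>0$ --- replacing $\sigma$ by $-\sigma$ if necessary. Everything else (compactness of $C$, the fact that a weighted average dominates the minimum, and the identification of the matrix form of~(\ref{eq:PSATrestrictions}) with ``$p\in C$'') is routine.
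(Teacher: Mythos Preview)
Your proof is correct: the identification of (b) with ``$p\in\mathrm{conv}\{a_j\}$'' is exactly the content of the PSAT restriction~\eqref{eq:PSATrestrictions}, the (b)$\Rightarrow$(a) direction is the elementary averaging inequality you state, and the contrapositive of (a)$\Rightarrow$(b) via strict separation of a point from a compact convex set is sound, including your remark about orientation of $\sigma$.

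There is nothing to compare against, however: the paper does not supply a proof of this proposition. It is stated as a classical result attributed to de Finetti (with citations) and used as a black box; the surrounding text only draws the consequence that coherence coincides with satisfiability and hence with the Kolmogorov axioms via Lemma~\ref{lemma:kolmogorov}. Your convex-geometric argument is precisely the standard modern rendering of de Finetti's theorem, so had the paper included a proof it would almost certainly have been this one (or the equivalent Farkas/LP-duality formulation you also sketch, which fits naturally with the linear-algebraic idiom used elsewhere in the paper).
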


As a consequence of Proposition~\ref{pr:coherent} and Lemma~\ref{lemma:kolmogorov}, a coherent assignment is one that respects the axioms of probability theory.  Furthermore, to decide if an assignment is coherent, we can employ linear algebraic methods that solve \eqref{eq:PSATrestrictions}.

\begin{example}\label{ex:coherent}\rm
	In Example~\ref{ex:psat1}, consider a negative stake $\sigma= -1$ for the hypothesis information, and a positive stake of $\sigma=1$ for the probabilistic data, thus obtaining a total balance of
	\[S = -1\cdot((1-v(a \lor b))+(1-v(a \lor c))+(1-v(b \lor c))) + 1 \cdot((0.6-v(a))+(0.6-v(b))+(0.6-v(c)))\]
	It turns out that $S<0$ for all 8 possible worlds $v$, so this choice of stake constitutes a Dutch Book and the assignment is incoherent and, by Proposition~\ref{pr:coherent}, it is an unsatisfiable PSAT instance.\qed
\end{example}

\subsection{Algorithms for PSAT Solving}
\label{sec:psatsolve}

In this presentation, we follow Finger and De Bona~\citeyear{FDB2011,DF2015b}.

An important result of \cite{GKP1988}, which is an application of Carathéodory's Theorem~\cite{Eck93},  guarantees that a solvable PSAT instance has a ``small'' witness.

\begin{proposition}\label{fact:NP}
	If a PSAT instance $\Sigma=\{P(\alpha_i) = p_i| 1 \leq i \leq k\}$ 	is satisfiable, then there is a solution. $\pi$ to the PSAT restrictions~\eqref{eq:PSATrestrictions} such that there at most $k+1$ elements $\pi_j \geq 0$. \qed
\end{proposition}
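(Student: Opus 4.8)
The plan is to deduce this from Carath\'eodory's Theorem applied to the convex geometry of the PSAT restrictions. First I would reformulate the satisfiability condition in the right way: a PSAT instance with equalities is satisfiable iff the target vector $\binom{p}{1}\in\mathbb{R}^{k+1}$ lies in the convex hull of the $2^n$ columns $\binom{A_{\cdot j}}{1}$ of the augmented matrix (the columns being exactly the points $\binom{v_j(\alpha_1),\dots,v_j(\alpha_k)}{1}$ for each valuation $v_j$), since the constraints $\pi\ge 0$, $\sum_i\pi_i=1$, $A\pi=p$ say precisely that $\binom{p}{1}$ is a convex combination of those columns with coefficients $\pi_j$.

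Next I would invoke Carath\'eodory's Theorem in the form cited (\cite{Eck93}): if a point in $\mathbb{R}^{d}$ lies in the convex hull of a set $S$, then it lies in the convex hull of at most $d+1$ points of $S$. Here the ambient dimension is $d=k+1$ (the augmented space), so this would give a bound of $k+2$ nonzero $\pi_j$, which is one worse than claimed. To shave off the extra one, I would either (i) use the affine version of Carath\'eodory: since all the columns already lie in the affine hyperplane $\{x_{k+1}=1\}$, which is an affine space of dimension $k$, a point in their convex hull is a convex combination of at most $k+1$ of them; or equivalently (ii) observe that the system $A\pi=p,\ \sum\pi_i=1$ is a system of $k+1$ equations, so a basic feasible solution of the linear program has at most $k+1$ positive coordinates by the standard vertex/basic-solution argument. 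Either route yields the bound of $k+1$ strictly positive entries. I would present the affine Carath\'eodory version, since that is the spirit of the cited reference.

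Finally I would assemble the argument: given that $\Sigma$ is satisfiable, pick any solution $\pi$, read off the representation of $\binom{p}{1}$ as a convex combination of the columns, apply affine Carath\'eodory to extract a sub-representation supported on at most $k+1$ columns, and let the new $\pi$ have those coefficients (and zeros elsewhere); this new $\pi$ still satisfies \eqref{eq:PSATrestrictions} and has at most $k+1$ nonzero components.

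The main obstacle is purely bookkeeping: getting the dimension count exactly right so that the bound comes out as $k+1$ rather than $k+2$, i.e.\ being careful to use the affine hull (dimension $k$) rather than the linear span (dimension $k+1$) of the augmented columns. There is also a minor point in handling the general case where some $\bowtie_i$ are inequalities rather than equalities, but as the proposition is stated only for equalities this does not arise; if desired one can reduce inequalities to equalities by slack arguments, or simply note that the basic-solution viewpoint handles them uniformly.
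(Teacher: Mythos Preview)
Your proposal is correct and matches the paper's approach: the paper does not give a detailed proof but simply attributes the result to \cite{GKP1988} as ``an application of Carath\'eodory's Theorem~\cite{Eck93}'' and marks the proposition with \qed. Your write-up in fact supplies the details the paper omits, including the correct observation that one must use the affine version (or equivalently the basic-feasible-solution argument) to obtain the sharp bound $k+1$ rather than $k+2$.
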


Proposition~\ref{fact:NP} implies that the complexity of PSAT is in NP.  The special case where all $p_i = 1$ makes classical SAT a special case of PSAT, so PSAT is NP-hard. It follows that PSAT is NP-complete. 

A PSAT instance is in \emph{propositional normal form} if it can be partitioned in two sets, $\tuple{\Gamma,\Psi}$, where $\Gamma = \{P(\alpha_i) = 1 | 1 \leq i \leq m\}$ and $\Psi = \{P(y_i) = p_i | y_i \textrm{ is a propositional symbol, } 1 \leq i \leq k\}$, with $0 < p_i < 1$. The partition $\Gamma$ is the SAT part of the normal form, usually represented only as a set of propositional formulas and $\Psi$ is the \emph{propositional probability assignment} part.  By adding at most $k$  extra variables, any PSAT instance can be brought to normal form in polynomial time.

\begin{example}\label{ex:drunk2}\rm
	The PSAT instance in Example~\ref{ex:drunk1} is already in normal form, with $\Gamma = 	\{x_1 \lor x_2, x_1 \lor x_3, x_2 \lor x_3\}$ and $\Psi = \{P(x_1) = P(x_2) = P(x_3) = 0.6\}$.  This indicates that the normal form is a ``natural'' form in many cases, such as when one wants to confront a 	theory $\Gamma$ with the evidence $\Psi$.  

	For the formulation of Example~\ref{ex:ant}, we add three new variables, $y_1, y_2, y_3$ and make
	\[\Gamma 
		\begin{array}[t]{l}
		= \left\{
			\begin{array}{l}
				y_1 \to (x_1 \lor x_2) , ~
				(x_1 \lor \lnot x_2) \to y_2, ~
				x_1 \to y_3
			\end{array}
		  \right\} \\
		\equiv
		\left\{
			\begin{array}{l}
				x_1 \lor  x_2 \lor \lnot y_1,
				\lnot x_1 \lor y_2, x_2 \lor y_2,
				\lnot x_1 \lor y_3
			\end{array}
		\right\} 
		\end{array}
	\]
	and $\Psi = \{P(y_1) = 0.75, P(y_2) = \frac{1}{3}, P(y_3) = 0.15\}$. \qed
\end{example}

The algebraic formalization of PSAT~\eqref{eq:PSATrestrictions} has a special interpretation if the formula is in normal form, in which the columns of matrix $A$ are $\Gamma$-consistent valuations; a valuation $v$ over $y_1, \ldots, y_k$ is \emph{$\Gamma$-consistent} if there is an extension of $v$ over $y_1, \ldots, y_k, x_1, \ldots, x_n$ such that $v(\Gamma ) = 1$.  This property is the basis for encoding instances of PSAT into those of SAT.  However, due to the cubic increase in the number of variables, this method is too inefficient.  For details on this form of reduction, see~\cite{DF2015b}.

Instead, we plan to solve ~\eqref{eq:PSATrestrictions} without explicitly representing the exponentially large matrix $A$, using a method called~\emph{column generation}.  For that, we consider the following linear program:
 
\begin{align}\label{eq:program}
\begin{array}{ll}
	\min & c' \cdot \pi \\
	\mbox{subject to} 
		& A \cdot \pi=p \\
		& \pi\geq 0 \mathrm{~and~} \sum \pi_i = 1
	\end{array}
\end{align}

The cost vector $c$ in \eqref{eq:program} is a $\{0,1\}$-vector such that $c_i = 1$ iff column $A^j$ is $\Gamma$-inconsistent.  Thus, the column generation process proceeds by generating $\Gamma$-consistent columns.  The result of this minimization  process reaches total cost $c' \cdot \pi =0$ iff the input instance is satisfiable.  

We now describe the column generation process presented in Algorithm~\ref{algo:PSATviaSAT}, which solves \eqref{eq:program}. We start by describing the format of the input data. Condition $\sum \pi_i = 1$  in \eqref{eq:program} is usually incorporated in matrix $A$.  By convention, this equation always be the first line of $A$. Also by convention, vector $p$ is sorted in decreasing order, such that its first position contains a 1, corresponding to the equation $\sum \pi_i = 1$; accordingly, vector $p$ is prefixed with a 1.  Let $k=|\Psi|$. This convention allows us to solve the linear program \eqref{eq:program} initializing $A$ as an upper triangular matrix $T_\mathrm{up}$, which is a $(k+1) \times (k+1)$ square matrix where elements on the diagonal and above it are 1 and the remaining ones are 0.  As a consequence, the initial probability distribution $\pi$ is initialized such that $\pi_i = p_i - p_{i+1}$, $1 \leq i \leq k$ and $\pi_{k+1} = p_{k+1}$. The cost $c$ is a $\{0,1\}$-vector in which $c_j = 1$ iff column $A^j$ is $\Gamma$-inconsistent, $1 \leq j \leq k+1$.

In the column generation process, columns will be added to $A$, and the vectors for cost $c$ and solution $\pi$ will be correspondingly extended. As  all generated  columns at  the following  steps are  $\Gamma$-consistent, all cost elements added to $c$ are 0.

\begin{algorithm}[t] 
	\caption{\textit{PSATViaColGen}$(\phi)$ \label{algo:PSATviaSAT}}
	\textbf{Input:} a normal form PSAT formula $\tuple{\Gamma,\Psi}$.
	
	\textbf{Output:} a solution $(\pi,A)$ for \eqref{eq:program}, if one exists; ``No'', otherwise.
	
	\renewcommand{\algorithmiccomment}[1]{\hfill{// #1}}
	{\small
		\begin{algorithmic}[1]
			\STATE $A_{(0)} = T_\mathrm{up}$; compute cost vector $c^{(0)}$ and $\pi^{(0)}$ \label{lin:psatini}
			\FOR{$s=0;~c^{(s)}{'} \cdot \pi^{(s)} > 0;~s\!\!+\!\!+$} \label{lin:psatcgloop}
				\STATE $z^{(s)} = \textit{DualSolution}(A_{(s)}, p, c^{(s)})$ \label{lin:psatsimplex}
				\STATE $y^{(s)} = \mathit{GenerateColumn}(z^{(s)},\Gamma)$ \label{lin:psatcg}
				\RETURN ``No'' \textbf{ if } column generation failed 
				\STATE $A_{(s+1)} = \mathit{append-column}(A_{(s)}, y^{(s)})$ \label{lin:psatmerge}
				\STATE $c^{(s+1)} = \mathit{append}(c^{(s)},0)$
			\ENDFOR\label{lin:psatcgendloop}
			\RETURN $(\pi^{(s)},A_{(s)})$ such that $A_{(s)} \cdot \pi^{(s)} = p$ and $c^{(s)}{'}\cdot\pi=0$ \COMMENT{Successful termination}
		\end{algorithmic}}
\end{algorithm}

Column generation proceeds by steps. At step 0, we  start $A$, $c$ and $\pi$ as described above (line~\ref{lin:psatini}). At   each   step $s$,   we   start   by   solving    the   linear   program $A_{(s)}  \cdot \pi^{(s)}  =  p$ (line~\ref{lin:psatsimplex});  so we  suppose there is a  linear programming solver available; for an algorithm that does not presuppose a linear solver, see~\cite{FDB2011}. We require  that the solution generated  contains  the  \emph{primal  solution} $\pi^{(s)}$  as  well  as  the \emph{dual  solution} $z^{(s)}$~\cite{BT1997}; the dual solution is given by $z = c_B \cdot B^{-1}$, where $B$ is the basis of the linear program at step $s$, that is, a square sub-matrix of $A$ used to compute $\pi^{(s)}$ as the solution of $B \pi^{(s)}= p$, and $c_B$ is the cost of the columns of the basis.   These are  used  in the  column generation process  (line~\ref{lin:psatcg}) described below. If  column generation fails,    then    the   process    cannot    decrease    current   cost    and Algorithm~\ref{algo:PSATviaSAT} is terminated with  a negative decision.  Otherwise, a new column is generated and $A$ and  $c$ are expanded.  At the end, when the objective  function  has reached  0,  the  final values  of  $A$  and $x$  are returned. 

The idea of SAT-based column generation is to map a linear inequality over a set of $\{0,1\}$-variables into a SAT-formula, using the $O(n)$ method described in~\cite{War1998}. The inequality is provided by the column selection method used by the \emph{Simplex Method} for solving linear  programs~\cite{BT1997,PS1998}.  Given  a linear program  in format~\eqref{eq:program},  the
\emph{reduced  cost} $\bar{c}_y$  of  inserting column $y$  from  $A$ in a simplex basis is

\begin{align}\label{eq:psatcost}
	\bar{c}_y &= c_y - z' \cdot y 
\end{align}
where  $c_y$ is  the cost  associated  with column  $y$  and $z$  is the  dual solution of the  system $A \cdot \pi = p$ of size $k+1$.   As the generated column  $y$  is  always  $\Gamma$-satisfying, $c_y=0$,  so  to  ensure  a non-increasing value in the objective  function we need a non-positive reduced
cost, $\bar{c}_y \leq 0$, which leads us to
\begin{align}\label{eq:psatredcost}
	z' \cdot y \geq 0
\end{align}
As $y$ is a $\{0,1\}$-vector, inequality~\eqref{eq:psatredcost} can be transformed into a SAT-formula; that formula is added to $\Gamma$ to obtain $\alpha$, which encodes a solution to~\eqref{eq:psatredcost} that is  $\Gamma$-satisfying.   We then  send $\alpha$ to a SAT-solver; if it is unsatisfiable, there is no way to  reduce the cost of the linear  program's objective function; otherwise, we obtain a satisfying assignment $v$. The generated column is $v(y)$, the restriction of $v$ the variables in $y$, which is a solution to~\eqref{eq:psatredcost}.  A new basis is obtained by substituting $v(y)$ for an appropriate outgoing column.  The Simplex Method provides a way of choosing the outgoing column, and guarantees the new basis is a feasible solution to linear program~\eqref{eq:program} whose cost is smaller than or equal to the previous one.

We have shown how to  construct a SAT-based column  generation function $\mathit{GenerateColumn}(z, \Gamma)$, provided  we   are given a (dual) solution for the corresponding linear program. 

\begin{theorem}
	Algorithms~\ref{algo:PSATviaSAT} and $\mathit{GenerateColumn}$ provide a decision procedure for the PSAT problem.
\end{theorem}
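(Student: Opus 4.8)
The plan is to show that Algorithm~\ref{algo:PSATviaSAT}, together with the subroutine $\mathit{GenerateColumn}$, is a \emph{sound} decision procedure --- every answer it returns is correct --- and that it \emph{terminates}; a decision procedure for arbitrary PSAT instances is then the composition of the (polynomial, satisfiability-preserving) normal-form transformation with this algorithm, so I may assume the input is already a normal-form instance $\tuple{\Gamma,\Psi}$. The backbone of the argument is the semantic fact recorded in the text: the linear program~\eqref{eq:program} has optimal value $0$ if and only if the instance is satisfiable. One direction is immediate --- a probability distribution witnessing satisfiability assigns positive mass only to $\Gamma$-consistent valuations of $y_1,\dots,y_k$, hence is a feasible $\pi$ with $c'\pi=0$. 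For the converse, $c'\pi=0$ forces $\pi$ to vanish on every $\Gamma$-inconsistent column; extending each valuation in the support of $\pi$ to a model of $\Gamma$ over $x_1,\dots,x_n$ and pushing $\pi$ forward yields a distribution meeting all the assignments in $\Psi$.

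Next I would identify Algorithm~\ref{algo:PSATviaSAT} with the revised simplex method plus column generation applied to~\eqref{eq:program}, and maintain the loop invariant: at the head of step $s$, the columns of $A_{(s)}$ are the $k+1$ columns of $T_\mathrm{up}$ together with the $\Gamma$-consistent columns produced so far, $\pi^{(s)}$ is a basic \emph{feasible} solution of this restricted program, and $z^{(s)}=c_B B^{-1}$ is the associated dual. The base case is line~\ref{lin:psatini}: since $p$ is sorted non-increasingly with $p_1=1$, the values $\pi_i=p_i-p_{i+1}$ and $\pi_{k+1}=p_{k+1}$ are non-negative, sum telescopically to $p_1=1$, and satisfy $T_\mathrm{up}\pi=p$, while every column of $T_\mathrm{up}$ is a genuine $\{0,1\}$-valuation of $y_1,\dots,y_k$, so $c^{(0)}$ is well defined. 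The inductive step is precisely the correctness of $\mathit{GenerateColumn}$: a generated column $y$ is $\Gamma$-consistent, so $c_y=0$ and its reduced cost~\eqref{eq:psatcost} is $\bar c_y=-z^{(s)\prime}y$; an objective-non-increasing pivot asks for $\bar c_y\le 0$, i.e.\ $z^{(s)\prime}y\ge 0$ as in~\eqref{eq:psatredcost}. Warners' linear-size CNF encoding~\cite{War1998} of this pseudo-Boolean inequality, conjoined with $\Gamma$, is a formula $\alpha$ that is satisfiable exactly when such a $\Gamma$-consistent column exists; a model of $\alpha$ exhibits it, and the simplex ratio test then chooses an outgoing column so that the new basis stays feasible and the objective does not increase~\cite{BT1997}.

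The step I expect to be the main obstacle is certifying the ``No'' output. Suppose $\mathit{GenerateColumn}$ fails at a step $s$ with current objective value $c^{(s)\prime}\pi^{(s)}=\delta>0$; unsatisfiability of $\alpha$ says that every $\Gamma$-consistent column $y$ has $z^{(s)\prime}y<0$, hence strictly positive reduced cost. If the instance were satisfiable, the semantic fact above provides a feasible $\pi^\star\ge 0$ with $A\pi^\star=p$, supported only on $\Gamma$-consistent columns, with $c'\pi^\star=0$. Writing $\bar c_j=c_j-z^{(s)\prime}A^j$, one has $0=c'\pi^\star=\sum_j\bar c_j\pi^\star_j+z^{(s)\prime}p$, where $\sum_j\bar c_j\pi^\star_j\ge 0$ because every $j$ in the support of $\pi^\star$ has $\bar c_j>0$, and $z^{(s)\prime}p=c_B B^{-1}p=c^{(s)\prime}\pi^{(s)}=\delta$; hence $0\ge\delta>0$, a contradiction, so the instance is unsatisfiable. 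The delicate point worth emphasising is that this argument never prices out the $\Gamma$-inconsistent columns --- it relies only on a hypothetical witness being supported on $\Gamma$-consistent columns, which is exactly the family $\mathit{GenerateColumn}$ scans. The ``Yes'' case is then trivial: on normal termination $c^{(s)\prime}\pi^{(s)}=0$ with $A_{(s)}\pi^{(s)}=p$ and $\pi^{(s)}\ge 0$, so $\pi^{(s)}$ is a distribution supported on $\Gamma$-consistent columns, which unwinds to a model of the instance as in the first paragraph.

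It remains to argue termination. There are at most $2^k$ distinct $\Gamma$-consistent columns over $y_1,\dots,y_k$, so if $\mathit{GenerateColumn}$ is prevented from re-generating a column already present in $A_{(s)}$ --- which follows from a standard anti-cycling discipline (Bland's rule), or equivalently from demanding a strictly negative reduced cost, since at the optimum of the restricted program every present column has non-negative reduced cost --- the main loop runs at most $2^k$ times; inside each iteration the linear solver and the SAT call on the $O(n{+}k)$-size formula $\alpha$ both terminate. Putting soundness and termination together shows that Algorithm~\ref{algo:PSATviaSAT} with $\mathit{GenerateColumn}$ is a terminating, sound and complete decision procedure for PSAT.
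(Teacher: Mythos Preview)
Your proof is correct and follows the same approach as the paper's own proof, which rests on the equivalence ``$\tuple{\Gamma,\Psi}$ is satisfiable iff the linear program~\eqref{eq:program} reaches minimum $0$'' together with the observation that column generation fails only when the cost cannot be further decreased. Your write-up is considerably more detailed than the paper's terse argument---in particular your explicit reduced-cost contradiction for the ``No'' branch and your anti-cycling remark for termination spell out points the paper leaves implicit---but the underlying structure is identical.
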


\begin{proof}  
	The correctness of Algorithm~\ref{algo:PSATviaSAT} is a direct consequence from the fact that $\tuple{\Gamma,\Psi}$ is satisfiable iff the linear program \eqref{eq:program} reaches a minimum at 0. As column generation only fails when it is impossible to decrease the cost function, this process either fails or brings the cost to 0, which is the only way Algorithm~\ref{algo:PSATviaSAT} terminates with a solution. \qed
\end{proof}

Note that the proof above guarantees termination, but even if it uses a polynomial-time linear solver, no polynomial bound is provided for the number of steps, which can in principle be $O(2^k)$.  Several implementations using the simplex method, employing various column generation strategies, are described in~\cite{DF2015b}, which also descrbe important empirical properties of those implementation.

\section{Counting Quantifiers over Unary Predicates}
\label{sec:cqu}

Counting quantifiers are quantitative constraints which may superficially look different from probabilistic reasoning.  However, here we demonstrate that there are striking similarities between these two forms of reasoning allowing us to lump them together under the heading of Quantitative Logic Reasoning.

The  need  to  combine  deductive  reasoning  with  counting  and  cardinality capabilities in  a principled way has  been long recognized, but  the  complexity of  this task has precluded its development. However, by generating a fragment of counting quantifiers inspired by the PSAT formulation, we are able to present a useful deductive system with counting that is ``only'' NP-complete and which allows for reasonably efficient, deterministic algorithms.

The basic approach for adding  counting capabilities extends first-order logic with some form of generalized quantifiers~\cite{Mos1957}, and we employ here a Lindstrom-type of quantifier~\cite{Lin1966} that can express the  counting  notions of  ``there  are  at least/most  $n$ elements  with  property $P$''.  Counting is first-order expressible, but it requires a first-order encoding using at least as many symbols as the counts one aims to express. On the other hand, the number of symbols employed by counting quantifiers is only proportional to the number of digits of the counts expressed. Hence expressing counting in first-order logic results in formulas whose size is exponentially larger than those obtained by using counting quantifiers.

The satisfiability of a logic  with  counting quantifiers, but limited  to a two-variable fragment  with at  most binary  predicates, is decidable~\cite{GKV1997,GO1999} with an EXPTIME-hard  lower  bound~\cite{Baa1996} and a NEXPTIME~\cite{Pra2005} upper-bound\footnote{Note that the fragment mentioned here has the finite model property~\cite{Pra2008}.}; recent studies on the complexity of specific counting problems are found in~\cite{MMS2015,BH2015}. Focusing on a one-variable fragment containing counting quantifiers over unary predicates only, the decision problem becomes NP-complete, even  when restricted only  to a  fragment  called Syllogistic  Logic, but the decision algorithm used to show that is inherently non-deterministic~\cite{Pra2008}.  

In a previous work, we presented an expressive fragment of first-order logic with counting  quantifiers over unary predicates called CQU~\cite{FDB2017}, which was developed applying techniques similar to those used in the PSAT case.  Here we extend the work on CQU by introducing CQUEL, for which the satisfiability problem remains NP-complete even as it partially allows the presence of binary predicates.  We start by presenting CQU, extend it to CQUEL,  and then develop decision algorithms for it, We start with a general example.

\begin{example}\label{ex:cquinit}\rm
	Consider the following group of people with several ages
	\begin{quote}
		\begin{enumerate}[(a)]
			\item At most 15 grandparents are married or happy. \label{it:hinit}
			\item At least 10 parents are not happy.
			\item At most 7 parents are not married.
			\item All grandparents are parents. \label{it:hend}
			\item At most 7 grandparents are unmarried and unhappy. \label{it:conc}
			\item At least 8 grandparents are unmarried and unhappy. \label{it:inc}
		\end{enumerate}
	\end{quote}
	We would like a way to determine that  (\ref{it:hinit})--(\ref{it:hend}) are satisfiable.  We would also like to have a method that allows us to infer (\ref{it:conc}) from those statements; equivalently, as (\ref{it:inc}) can be seen as the negation of (\ref{it:conc}), determine that (\ref{it:hinit})--(\ref{it:hend}) and (\ref{it:inc}) are jointly unsatisfiable.  These possibilities are all covered by the CQU formalism.  Moreover, suppose we are given a list of parent-children pairs (i.e.~a binary relation), and define  a parent as someone who has a child and, likewise, a grandparent as someone who has a grandchild.  To deal with this more general formulation, one needs a more expressive formalism such as CQUEL. \qed
\end{example}

In the following, we present the Semantics of CQU (Section~\ref{sec:cqusat}) and its extension CQUEL (Section~\ref{sec:cquel}). Then we present an algebraic formulation of the CQUEL-SAT problem (Section~\ref{sec:algebra}) which is used as a basis for the algorithms that solve it (Section~\ref{sec:algo}).

\subsection{Semantics and Satisfiability of CQU}
\label{sec:cqusat}

We now present formally a function-free one-variable first-order fragment over a signature containing only unary predicates and constants, extended with explicit counting quantifiers $\exists_{\leq n}$ (at most $n$) and $\exists_{\geq  n}$ (at  least $n$), where $n \in \mathbb{N}$ is  a non-negative integer. The semantics is tarskian, with models of arbitrarily large cardinality. 

The fragment contains two types of sentences over a countable set of variables $V$. Let $\psi(x)$ be a  Boolean combination of unary predicates $p(x),  q(x)$,   etc. A \emph{counting sentence} has the form $\exists_{\leq  n} x\, \psi(x)$ or  $\exists_{\geq n}  x\, \psi(x)$. A \emph{universal sentence} has the form $\forall x \psi(x)$.  A formula $\phi$ over the fragment of  counting quantifiers over unary predicates (CQU), is a  conjunction of any finite number of counting sentences $\mathcal{Q}$ and universal sentences $\mathcal{U}$, $\phi = \tuple{\mathcal{Q},\mathcal{U}}$.  Note that the universal and counting sentences involve only one-variable and only unary predicates; when we introduce the CQUEL fragment below a two-variable fragment will be involved, with restricted use of binary predicates\footnote{First-order one- and two-variable fragments are decidable, but the coding of counting quantifiers employs several new variables, so decidability is not immediate; see Proposition~\ref{prop:pratt}.}.

For the semantics, let the \emph{domain} $D$  be a non-empty set.  Let a \emph{term} be a constant or a variable.   Consider an \emph{interpretation}  $\mathcal{I}$; when applied to a term $t$, $\mathcal{I}(t) \in D$ and when applied to a  unary  predicate $p$, $\mathcal{I}(p) \subseteq  D$;  $\mathcal{I}_{|x}$ represents an interpretation that is identical to $\mathcal{I}$, except possibly for the interpretation of $x$. Let $\phi$ be a CQU-formula;  by $\mathcal{I} \models \phi$ we mean that  $\phi$ is satisfiable over $D$ with interpretation $\mathcal{I}$, defined as

\[
	\begin{array}{l@{~\mathit{iff}~}l}
		D,\mathcal{I} \models p(t) & \mathcal{I}(t) \in \mathcal{I}(p)\\
		D,\mathcal{I} \models \lnot \psi & D,\mathcal{I} \not\models \psi\\
		D,\mathcal{I} \models  \psi \land \rho &  D,\mathcal{I} \models \psi \textrm{~and~} D,\mathcal{I} \models \rho\\ 
		D,\mathcal{I} \models \exists_{\leq n} x\, \psi & \left|\{\mathcal{I}_{|x}(x) \in D  | D,\mathcal{I}_{|x}  \models \psi\}\right| \leq n\\
		D,\mathcal{I} \models \exists_{\geq n} x\, \psi & \left|\{\mathcal{I}_{|x}(x) \in D  | D,\mathcal{I}_{|x}  \models \psi\} \right| \geq n
	\end{array}
\]

The usual definitions apply to other Boolean connectives.  Note that the negation of counting sentences can be expressed within the CQU fragments, namely $\lnot \exists_{\leq n}  x\, \psi \equiv \exists_{\geq n+1}  x\, \psi$ and $\lnot \exists_{\geq n+1}  x\, \psi \equiv \exists_{\leq n}  x\, \pi$. The first-order  existential quantifier is expressed as $\exists x \psi \equiv \exists_{\geq 1} x\, \psi$ and  the universal quantifier as $\forall  x  \psi  \equiv  \exists_{\leq 0} x\, \lnot  \psi$.  The  \emph{exact counting quantifier} is defined as $\exists_{=n}   x\,  \psi   \equiv   \exists_{\leq  n}   x \, \psi   \land \exists_{\geq n} x\, \psi$.
 
If  there   are  $D$  and $\mathcal{I}$ such that $D, \mathcal{I} \models \phi$, then $\phi$ is a \emph{satisfiable} formula; otherwise it is unsatisfiable.     A    formula     $\phi$    \emph{entails} $\psi$ ($\phi   \models    \psi$)   iff  every pair $(D,\mathcal{I})$ that satisfies the former also satisfies the latter.    $\phi$   is \emph{equivalent} to   $\psi$
($\phi  \equiv  \psi$)  iff  they are satisfied by the same pairs $(D,\mathcal{I})$. The  problem CQU-SAT consists of deciding  whether a formula  is satisfiable.

If we remove the restriction to conjunctions of universal and counting sentences, we obtain the fragment called \cOne{}, studied in~\cite{Pra2008}.  Unlike CQU, \cOne{} allows for disjunctions between quantified formulas, such as $\exists_{\geq 7} x\, \psi \lor  \exists_{\leq 9} y\,  \rho$.  As the \cOne{} fragment has the finite model property and contains CQU, we obtain the following.

\begin{proposition}\label{prop:pratt}\textbf{\cite{Pra2008}} 
	Every satisfiable CQU formula is satisfiable over a finite domain. Moreover, CQU-SAT is strongly NP-complete. \qed
\end{proposition}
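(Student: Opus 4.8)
The statement has a qualitative half (the finite model property) and a quantitative half (strong NP-completeness), which I would establish separately, both leaning on the inclusion $\textrm{CQU}\subseteq\cOne{}$. For the finite model property, note that, syntactically, a CQU-formula $\tuple{\mathcal{Q},\mathcal{U}}$ is nothing but a conjunction of counting sentences and universal sentences, hence a \cOne{}-formula: the only extra freedom \cOne{} grants is forming disjunctions of quantified sentences, which CQU never uses. Since \cOne{} has the finite model property (as recalled above, from \cite{Pra2008}), so does its sub-fragment CQU, and every satisfiable CQU-formula has a finite model.

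For NP-hardness — in fact strong NP-hardness — I would reduce propositional satisfiability. Given a propositional formula $\Phi$ over $z_1,\dots,z_n$, take fresh unary predicates $p_1,\dots,p_n$ and let $\hat\Phi(x)$ be the Boolean combination obtained by replacing each $z_i$ with $p_i(x)$. Then the single universal CQU-sentence $\forall x\,\hat\Phi(x)$ is satisfiable iff $\Phi$ is: a satisfying assignment of $\Phi$ turns a one-element structure into a model, and any element of a model of $\forall x\,\hat\Phi(x)$ reads off a satisfying assignment. The reduction is polynomial and introduces no numeric parameter (the only integer in sight is the $0$ implicit in $\forall\equiv\exists_{\leq 0}\,\lnot$), so the hardness persists even under a unary encoding of the counting thresholds.

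For membership in NP I would pass through types and a linear-feasibility reformulation. Call a \emph{type} a maximal consistent conjunction of literals over $p_1,\dots,p_n$; as far as CQU-sentences see, a structure is determined by the multiplicities $a_\tau\in\mathbb{N}$ of the realized types, subject to $\sum_\tau a_\tau\geq 1$. A universal sentence $\forall x\,\psi$ forces $a_\tau=0$ whenever $\tau\not\models\psi$, restricting the admissible types to a set $T^{*}$, and a counting sentence $\exists_{\leq n}x\,\psi$ (resp.\ $\exists_{\geq m}x\,\psi$) becomes the linear constraint $\sum_{\tau\in T^{*},\,\tau\models\psi}a_\tau\leq n$ (resp.\ $\geq m$). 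Thus $\phi$ is satisfiable iff this system of at most $|\mathcal{Q}|+1$ linear (in)equalities over $\mathbb{N}$ is feasible. Starting from a finite model (available by the first part) and keeping, for each sentence $\exists_{\geq m}x\,\psi$, exactly $m$ of its witnesses — plus one arbitrary element if there is no lower bound — yields a substructure that still satisfies every universal sentence (these hold downwards) and every counting sentence (upper bounds can only improve, lower bounds hold by construction), of size at most $1+\sum\{m\mid\exists_{\geq m}x\,\psi\in\mathcal{Q}\}$. Its realized types, listed with multiplicities written in binary, form a certificate checkable in polynomial time; together with strong NP-hardness this gives the proposition.

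The main obstacle is precisely the last step: the substructure just described has polynomially many elements only when the thresholds are given in unary, so for genuine membership in NP (binary thresholds) one must also bring the number of \emph{distinct} realized types down to a polynomial, i.e.\ exhibit a feasible solution of the type-multiplicity system with polynomially bounded support. The real-valued Carathéodory bound of Proposition~\ref{fact:NP} does not transfer verbatim, since here the $a_\tau$ must be integral and the $0/1$ constraint matrix need not be totally unimodular; the fix is an integer Carathéodory argument — feasibility of an integer system with few $0/1$ constraints implies a feasible solution whose support is polynomial in the number of constraints and the bit-length of the right-hand side — after which bounding the surviving multiplicities reduces to the substructure argument above. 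Everything else (normal-form tidying, evaluating Boolean combinations on types) is routine bookkeeping.
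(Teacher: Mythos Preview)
The paper does not actually prove this proposition: it is stated without proof and attributed to \cite{Pra2008}, with only the preceding sentence supplying an argument for the finite-model half (inheritance from \cOne{}, exactly as you do). Your proposal therefore goes well beyond the paper, giving a genuine proof sketch where the paper simply cites.

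Your argument is essentially correct and in the spirit of Pratt-Hartmann's original. Two remarks. First, since the paper glosses ``strongly NP-complete'' as NP-completeness under \emph{unary} encoding of the thresholds, your substructure bound $1+\sum m$ already suffices for membership in that sense; the integer Carath\'eodory step is needed only if you also want membership under binary encoding. Second, the integer analogue of Carath\'eodory that you flag as the main obstacle is precisely what the paper later invokes as Proposition~\ref{prop:size}: a feasible integer system with $\{0,1\}$ coefficients admits a solution with at most $\frac{5}{2}k\log k + 1$ non-zero entries, which closes exactly the gap you identify.
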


Strong NP-completeness means that when $n$ in $\exists_{\leq  n}, \exists_{\geq n}$  is given in unary notation, the decision remains NP-complete. As with probabilistic logic, we propose a normal form for formulas in the CQU fragment.  Existence of such normal form for CQU will allow us to extend the method to CQUEL.

\begin{definition} \label{def:nf}
	Let $\mathcal{U}$ be a finite set of universal sentences and let $\mathcal{Q}$ be a finite set of quantified unary predicates of the form	$\exists_{\leq n} x\, p(x)$ or $\exists_{\geq  n} x\, p(x)$, where $p$ is a unary atomic predicate. 	A \emph{normal form} CQU formula $\phi=\tuple{\mathcal{Q},\mathcal{U}}$ is the conjunction of formulas in $\mathcal{Q} \cup \mathcal{U}$.
\end{definition}

In the following we use $\bowtie$ to refer to $\leq$ or $\geq$, so  the  CQU  normal  form is  characterized  by  counting  quantifier sentences of the form $\exists_{\bowtie n} x\,p(x)$. By adding a small number of  extra predicates, any CQU  formula can be brought  to normal form.

\begin{lemma}\label{lemma:nf}
	For every CQU formula $\phi$ there exists a normal form formula $\phi'$ such that $\phi$ is a satisfiable iff $\phi'$ is; the normal form $\phi'$ can be built from $\phi$ in polynomial time.
\end{lemma}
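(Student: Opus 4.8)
The plan is a Tseitin-style flattening: introduce one fresh unary predicate to name each complex Boolean combination that occurs under a counting quantifier, and record each naming by a universal sentence. The set $\mathcal{U}$ of universal sentences itself needs no change, since the normal form of Definition~\ref{def:nf} already permits arbitrary Boolean matrices in universal sentences. Because in CQU a counting quantifier never occurs inside another — the matrix $\psi(x)$ is quantifier-free — a single, non-recursive pass suffices, unlike the recursive Tseitin transformation one would need in a fully nested language.

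Concretely, write $\mathcal{Q}=\{\exists_{\bowtie_i n_i}x\,\psi_i(x)\mid 1\le i\le \ell\}$. For each $i$ such that $\psi_i(x)$ is not already a single atomic predicate, pick a fresh unary predicate symbol $q_i$ not occurring in $\phi$; when $\psi_i(x)$ is already atomic, let $q_i$ be that predicate. Put
\[
\mathcal{Q}'=\{\exists_{\bowtie_i n_i}x\,q_i(x)\mid 1\le i\le \ell\},\qquad
\mathcal{U}'=\mathcal{U}\cup\{\forall x\,(q_i(x)\leftrightarrow\psi_i(x))\mid \psi_i \text{ not atomic}\},
\]
and set $\phi'=\tuple{\mathcal{Q}',\mathcal{U}'}$. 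Every sentence of $\mathcal{Q}'$ quantifies an atomic predicate, and every added sentence of $\mathcal{U}'$ is a universal sentence (its matrix $q_i(x)\leftrightarrow\psi_i(x)$ is a Boolean combination of unary predicates), so $\phi'$ is a normal form formula.

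For equisatisfiability I would argue both directions directly from the satisfaction clauses. If $(D,\mathcal{I})\models\phi$, expand $\mathcal{I}$ to $\mathcal{I}'$ on the enlarged signature by setting $\mathcal{I}'(q_i)=\{\mathcal{I}_{|x}(x)\mid D,\mathcal{I}_{|x}\models\psi_i\}$ for each fresh $q_i$; then $\forall x\,(q_i(x)\leftrightarrow\psi_i(x))$ holds by construction, the set of witnesses deciding $\exists_{\bowtie_i n_i}x\,q_i(x)$ coincides with the one deciding $\exists_{\bowtie_i n_i}x\,\psi_i(x)$, so all of $\mathcal{Q}'$ holds, and every sentence of $\mathcal{U}$ is unaffected since it mentions no fresh symbol; hence $(D,\mathcal{I}')\models\phi'$. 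Conversely, if $(D,\mathcal{I}')\models\phi'$, let $\mathcal{I}$ be $\mathcal{I}'$ restricted to the signature of $\phi$; the definitional sentences force $\{\mathcal{I}'_{|x}(x)\mid D,\mathcal{I}'_{|x}\models q_i(x)\}=\{\mathcal{I}'_{|x}(x)\mid D,\mathcal{I}'_{|x}\models\psi_i(x)\}$, and the right-hand set is computed identically under $\mathcal{I}$ because $\psi_i$ contains no fresh symbol, so the counting clauses yield $\mathcal{Q}$, while $\mathcal{U}\subseteq\mathcal{U}'$ yields $\mathcal{U}$; hence $(D,\mathcal{I})\models\phi$.

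Finally, at most $|\mathcal{Q}|$ fresh predicates are introduced, and each definitional sentence has size linear in the corresponding $\psi_i$, which already appears in $\phi$, so $|\phi'|=O(|\phi|)$ and the whole construction runs in linear time. This lemma is essentially a warm-up and I do not expect a genuine obstacle; the only point needing care is the semantic bookkeeping just described — verifying that pinning each $q_i$ to the extension defined by $\psi_i$ leaves the original universal sentences untouched while making every renamed counting sentence agree with its original — which is exactly what the biconditionals and the freshness of the $q_i$ guarantee.
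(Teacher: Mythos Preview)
Your proposal is correct and follows essentially the same construction as the paper: for each counting sentence whose matrix is not atomic, introduce a fresh unary predicate, add the definitional biconditional to $\mathcal{U}$, and replace the matrix by the fresh atom. Your write-up is more explicit than the paper's (which simply says ``by construction'' for equisatisfiability), but the approach is identical.
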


\begin{proof}
	Consider       $\phi=\tuple{\mathcal{Q},\mathcal{U}}$.        We       build
	$\phi'=\tuple{\mathcal{Q}',\mathcal{U}'}$            starting           with
	$\mathcal{Q}'=\emptyset$  and $\mathcal{U}'=\mathcal{U}$.   Then, for  every
	quantified formula $\exists_{\bowtie  n} x\, \psi$, if $\psi$  is an unary
	predicate,  just  add $\exists_{\bowtie  n}  x\,  \psi$ to  $\mathcal{Q}'$;
	otherwise, create a new unary  predicate $p_\mathrm{new}$ and add $\forall x
	(p_\mathrm{new}(x)    \leftrightarrow   \psi)$    to   $\mathcal{U}'$    and
	$\exists_{\bowtie  n} x\,  p_\mathrm{new}(x)$ to  $\mathcal{Q}'$; at  every
	step $\phi'$ is in  normal form, and at its end,  by construction, $\phi'$ is
	satisfiable iff $\phi$ is. \qed
\end{proof}

\begin{example}\label{ex2}\rm
	Consider Example~\ref{ex:cquinit}, which can be formalized as follows:
	\begin{quote}
		\begin{enumerate}[(a)]\small
			\item $\exists_{\leq 15} x\, (g(x) \land (m(x) \lor h(x) ))$
			\item $\exists_{\geq 10} x\, (g(x) \land \lnot h(x))$ \label{ex2:unary}
			\item $\exists_{\leq  7} x\, (p(x) \land \lnot m(x))$
			\item $\forall x ( g(x) \to p(x) )$
			\item $\exists_{\leq 7} x\, ( g(x) \land \lnot m(x) \land \lnot h(x) )$ \label{ex2:pen}
			\item $\exists_{\geq 8} x\, ( g(x) \land \lnot m(x) \land \lnot h(x) )$ \label{ex2:last}
		\end{enumerate}
	\end{quote}		
	Clearly, (\ref{ex2:pen}) is the negation of (\ref{ex2:last}); we  use only the latter.  To bring counting formulas to normal form, introduce the predicates, $q_1, q_2, q_3, q_4$.  Let $\mathcal{U} = \{\forall x (q_1(x) \leftrightarrow (g(x) \land (m(x) \lor h(x) )), \forall x  (q_2(x) \leftrightarrow  (g(x) \land \lnot h(x)), \forall x (q_3(x) \leftrightarrow (p(x) \land \lnot m(x)), \forall x (g(x) \to  p(x)), \forall x (q_4(x) \leftrightarrow ( g(x) \land \lnot m(x) \land \lnot h(x) )\}$, so that we can have counting  quantification over unary predicates only; let $\mathcal{Q}=\{\exists_{\leq 15} x\,  q_1(x), \exists_{\geq 10} x\, q_2(x), \exists_{\leq 7} x\, q_3(x)\}$, such that we expect $\tuple{\mathcal{Q}, \mathcal{U}}$ to be satisfiable and $\tuple{\mathcal{Q} \cup \{\exists_{\geq 8} x\, q_4(x)\},  \mathcal{U}}$ to be unsatisfiable.\qed
\end{example}

It is important to note that the satisfiability problem for a set of CQU universal formulas is an NP-complete problem, for it can be reduced to a propositional problem.  

In fact, consider a normal form $\phi=\tuple{\mathcal{Q},\mathcal{U}}$.  Consider the $k = |\mathcal{Q}|$ unary predicates occurring in $\mathcal{Q}$, $p_1(x), \ldots, p_k(x)$;  then there  are $2^k$ \emph{elementary terms} of the form $e(x)  = \lambda_1(x)  \land  \ldots \land  \lambda_k(x)$, where each $\lambda_i(x)$  is either  $p_i(x)$ or  $\lnot p_i(x)$;  an elementary term $e(x)$  is called  \emph{susceptible} if it  is consistent  with the universal  sentences,  that  is,  the set  $\{\exists  x  e(x)\}  \cup \mathcal{U}$ has a model.

Semantically, each elementary term is interpreted as a \emph{domain elementary subset} $E \subseteq D$, $E=L_1 \cap \ldots \cap L_k$,  where each $L_i$ is  either the interpretation of $p_i$ or its complement  with respect to the domain $D$.  In any interpretation that satisfies $\phi=\tuple{\mathcal{Q},\mathcal{U}}$, only susceptible elementary terms  may be interpreted as  non-empty elementary subsets, otherwise the interpretation falsifies $\mathcal{U}$.

\begin{lemma}\label{lemma:npSusceptible}
	The problem of determining if there exists an elementary domain subset over the unary predicates in $\mathcal{Q}$ that is susceptible with a set of CQU universal formulas $\mathcal{U}$ is NP-complete.
\end{lemma}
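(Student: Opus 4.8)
The plan is to show the problem is in NP and is NP-hard separately. For membership in NP: an elementary term $e(x) = \lambda_1(x) \land \ldots \land \lambda_k(x)$ is a polynomial-size object (a $\{0,1\}$-assignment to the $k$ predicates $p_1, \ldots, p_k$), so it serves as the certificate. Given such a candidate $e$, susceptibility amounts to checking that $\{\exists x\, e(x)\} \cup \mathcal{U}$ has a model. Since all sentences in $\mathcal{U}$ are universal sentences $\forall x\, \psi_j(x)$ over unary predicates in a one-variable fragment, a model exists iff there is a single domain element whose predicate-profile simultaneously satisfies $e$ and every $\psi_j$; that is, iff the propositional formula $e \land \bigwedge_j \psi_j$ (over the Boolean variables corresponding to all unary predicates appearing in $\mathcal{U}$, with the $\lambda_i$ literals fixed) is satisfiable. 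This is a single propositional SAT check, so the whole verification is in NP; hence the problem is in NP.

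For NP-hardness I would reduce from propositional SAT. Given a CNF formula $F$ over variables $x_1, \ldots, x_k$, treat each $x_i$ as a unary predicate $p_i$, let $\mathcal{Q}$ be any set of counting sentences over $p_1, \ldots, p_k$ that mentions all of them (e.g.\ $\{\exists_{\leq 0}\, x\, p_i(x) \,:\, i=1,\ldots,k\}$ — the actual bounds are irrelevant, they only serve to declare the predicates of $\mathcal{Q}$), and let $\mathcal{U} = \{\forall x\, C_j(x) \,:\, C_j \text{ a clause of } F\}$, i.e.\ $\mathcal{U}$ encodes $F$ as a single universal sentence $\forall x\, F(x)$. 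Then an elementary term $e$ over $p_1, \ldots, p_k$ corresponds exactly to a truth assignment to $x_1, \ldots, x_k$, and $e$ is susceptible — i.e.\ $\{\exists x\, e(x)\} \cup \mathcal{U}$ has a model — iff that assignment satisfies every clause, iff it satisfies $F$. So there exists a susceptible elementary term over the predicates of $\mathcal{Q}$ iff $F$ is satisfiable. The reduction is clearly polynomial, giving NP-hardness, and combined with membership this yields NP-completeness.

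The main thing to get right is the equivalence ``$\{\exists x\, e(x)\} \cup \mathcal{U}$ has a model iff the corresponding propositional formula is satisfiable.'' The nontrivial direction is building a model from a satisfying propositional assignment $v$: one takes a one-element domain $D = \{d\}$, sets $\mathcal{I}(p_i) = \{d\}$ or $\emptyset$ according to $v$, and must check that this $\mathcal{I}$ indeed satisfies each universal sentence and realizes $e$ nonemptily; the universal sentences hold because $v$ satisfies each $\psi_j$ as a Boolean formula and there is only one element to quantify over. (If one wants $\mathcal{Q}$ itself to be respected as well, note that its membership in this lemma is immaterial — the lemma only asks about susceptibility against $\mathcal{U}$ — but if desired one can enlarge $D$ with enough extra elements outside every $\mathcal{I}(p_i)$ to meet any ``at least'' constraints, which does not disturb susceptibility of $e$.) This argument is essentially the same propositional-reduction idea already used for CQU-SAT in Proposition~\ref{prop:pratt}, specialized to the susceptibility check, so no genuinely new difficulty arises.
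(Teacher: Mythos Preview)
Your approach is essentially the same as the paper's: strip the universal quantifiers from $\mathcal{U}$, treat each unary predicate as a propositional symbol, and observe that the existence of a susceptible elementary term is equivalent to the satisfiability of the resulting propositional formula; the singleton-domain model you build from a satisfying valuation is exactly the construction the paper gives. The paper compresses both directions into a single equivalence with SAT rather than separating membership and hardness as you do, but the content is the same.

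One technical slip to fix in your NP-membership argument: you propose the elementary term $e$ as the certificate and then say that checking its susceptibility is ``a single propositional SAT check, so the whole verification is in NP.'' But in the certificate paradigm the verification must run in polynomial time, not merely lie in NP; as written, your verifier has to solve a SAT instance over the remaining predicates, which is not polynomial. The easy repair is to take as certificate a \emph{full} valuation $v$ over all predicates occurring in $\mathcal{U}$ (not just those in $\mathcal{Q}$): then one checks in polynomial time that $v$ satisfies each $\psi_j$, and $e$ is obtained by restricting $v$ to the $\mathcal{Q}$-predicates. Equivalently---and this is what the paper does---simply note that the whole question ``does a susceptible $e$ exist?'' is polynomially equivalent to ``is the propositionalized $\mathcal{U}$ satisfiable?'', and inherit NP-completeness directly from SAT.
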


\begin{proof}
	Transform  the  set $\mathcal{U}$  into   a  propositional  formula,  by   deleting  the  external $\forall  x$ quantifiers  and considering  each  unary predicate  $p(x)$ as  a propositional  symbol  $p$.   Then determining the existence of a satisfying valuation is an NP-complete problem.  If there is such a valuation, we obtain a susceptible element by considering a satisfying valuation $v$ restricted to the  proposition corresponding to the unary predicates in  $\mathcal{Q}$.  In that case, we consider a singleton domain $D=\{d\}$ and an interpretation $\mathcal{I}$ such that $d \in \mathcal{I}(p)$ iff $v(p)=1$.
\end{proof}

We now expand these results of~\cite{FDB2017} to include universal quantification over binary relations. The aim is to maintain the decision problem in NP.

\subsection{Expanding CQU into CQUEL}
\label{sec:cquel}

Previous results involving counting quantifiers and binary predicates brought the complexity of the satisfiability problem into EXPTIME~\cite{Baa1996}. However those methods allowed for counting quantification over sentences involving binary predicates.  The idea here is to maintain counting quantification over unary predicates in $\mathcal{Q}$, but to expand the set of allowed sentences in $\mathcal{U}$ so as to maintain the complexity of $\mathcal{U}$-satisfiability in NP.

Our idea is to expand $\mathcal{U}$ to allow for sentences corresponding to the first-order translation of statements from description logic DL Light~\cite{CDLLR2005}, thus lightly expanding CQU into CQUEL.  The satisfiability problem for DL Light is tractable, and we show here that adding those formulas to $\mathcal{U}$ leaves the satisfiability complexity in NP. There is at least another well known tractable description logic, $\mathcal{EL}^{++}$, which is however maximal with respect to tractability, in the sense that extending its language with the expressivity of CQU universal sentences brings the complexity to EXPTIME-complete~\cite{BBL05}.

The first-order signature now contains a finite set $\mathcal{P}$ of unary predicates, a finite set $\mathcal{R}$ of binary relations and a finite set $\mathcal{C}$ of constants.  The set of \emph{basic concepts} is the smallest set of unary expressions such that:
\begin{itemize}
	\item every unary predicate $p \in \mathcal{P} $ is a basic concept;
	\item if $r\in \mathcal{R}$, then $\exists y\, r(x,y) $ and $\exists y\, r(y,x) $ are basic concepts.
\end{itemize}

Basic concepts form \emph{concepts} in the following way.

\begin{itemize}
	\item every basic concept $B(x)$ is a concept;
	\item if $B(x)$ is a basic concept, then $\lnot B(x)$ is a concept;
	\item if $C_1(x)$ and $C_2(x)$ are concepts, so is $C_1(x) \land C_2(x)$.
\end{itemize}

A set $\mathcal{E}$ of \emph{extended light (EL) constraints} is a finite set of universal formulas of the following form.
\begin{quote}
	\begin{enumerate}[(a)]
		\item \emph{Inclusion Assertion (IA)}: $\forall x ( B(x) \to C(x) )$, where $B(x)$ is a basic concept and $C(x)$ is a concept; 
		\item \emph{Functionality Assertion (FA)}: $\mathsf{Funct}(r)$ and $\mathsf{Funct}(r^-)$, for $r \in \mathcal{R}$.  The semantics of $\mathsf{Funct}(r)$ states that if $(d,d'), (d,d'') \in \mathcal{I}(r)$, then $d' = d''$; similarly, the semantics of $\mathsf{Funct}(r^-)$ states that if $(d',d), (d'',d) \in \mathcal{I}(r)$, then $d' = d''$.
		\item \emph{Data:} $p(a), r(a,b)$ for $a,b \in \mathcal{C}$, $p \in \mathcal{P}$ and $r \in \mathcal{R}$.
	\end{enumerate}
\end{quote}

Note that EL constraints, except FAs, belong to a two-variable first-order fragment; FAs require the use of three variables, however in a very limited way.  It turns out that a the consistency of a set of EL constraints is not only decidable, but even tractable.  

A set of constraints $\mathcal{E}$ is \emph{negative inclusion (NI) closed} if for every IA $A=\forall x ( B_1(x) \to \lnot B_2(x) )$ above such that $\mathcal{E} \models A$, then $A \in \mathcal{E}$.  The \emph{NI-closure} of a set of EL constraints $\mathcal{E}$, $\overline{\mathcal{E}}$, is the smallest NI-closed set of constraints that contains $\mathcal{E}$.  The tractability of the satisfiability of a set of EL constraints follows from the following result.

\begin{proposition}[Calvanese \emph{et al.}~\citeyear{CDLLR2005}]\label{prop:eltractable}
	The NI-closure $\overline{\mathcal{E}}$ can be computed in polynomial time on the number of EL constraints in $\mathcal{E}$.
\end{proposition}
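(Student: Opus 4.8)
The plan is to reduce the computation of $\overline{\mathcal{E}}$ to a saturation of a polynomially bounded set of candidate negative inclusions, and to exhibit a small sound-and-complete set of inference rules driving that saturation.

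\textbf{Normalising the constraints.} Every IA has the shape $\forall x(B(x) \to C(x))$ with $B$ basic and $C$ a conjunction of (possibly negated) basic concepts; since $\forall x(B(x)\to C_1(x)\land C_2(x))$ is equivalent to the two assertions $\forall x(B(x)\to C_1(x))$ and $\forall x(B(x)\to C_2(x))$, I would first replace $\mathcal{E}$, in linear time, by an equivalent set in which every IA is either a \emph{positive inclusion} $\forall x(B_1(x)\to B_2(x))$ or a \emph{negative inclusion} $\forall x(B_1(x)\to\lnot B_2(x))$ with $B_1,B_2$ basic. Writing $\mathcal{B}$ for the set of basic concepts over the signature, $|\mathcal{B}| \leq |\mathcal{P}| + 2|\mathcal{R}|$, so there are at most $|\mathcal{B}|^2$ syntactically distinct negative inclusions of the relevant form; in particular $\overline{\mathcal{E}}$ has polynomial size, and the task reduces to generating its negative part in polynomial time.

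\textbf{The rule system.} I would then close the negative inclusions of $\mathcal{E}$ under the rules: (i) from $B_1 \sqsubseteq \lnot B_2$ infer $B_2 \sqsubseteq \lnot B_1$; (ii) from a positive inclusion $B \sqsubseteq B'$ of $\mathcal{E}$ together with (a possibly already-derived) $B' \sqsubseteq \lnot B''$ infer $B \sqsubseteq \lnot B''$; and (iii) from $(\exists y\, r(x,y)) \sqsubseteq \lnot (\exists y\, r(x,y))$ infer $(\exists y\, r(y,x)) \sqsubseteq \lnot (\exists y\, r(y,x))$, and symmetrically, exploiting that a role and its inverse have the same projection so one is empty iff the other is. Functionality assertions enter the closure only through rule (iii), since on their own they assert no disjointness. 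Each rule application either adds one of the at most $|\mathcal{B}|^2$ possible negative inclusions or does nothing, applicability is testable in polynomial time, and the process is monotone; hence the saturation halts after polynomially many steps and produces a set $\overline{\mathcal{E}}^{\,\mathrm{rules}}$ in polynomial time. Soundness of (i)--(iii) is a routine semantic check on each rule.

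\textbf{Completeness and the main obstacle.} The substantive step is completeness: if $\mathcal{E} \models \forall x(B_1(x)\to\lnot B_2(x))$ then this negative inclusion lies in $\overline{\mathcal{E}}^{\,\mathrm{rules}}$. I would argue by contraposition, assuming non-derivability and building a canonical model of $\mathcal{E}$ in which some element belongs to both $B_1$ and $B_2$: start from a single individual stipulated to satisfy $B_1$ and $B_2$, then chase, repeatedly adding fresh witnesses for the existential basic concepts forced by positive inclusions while reusing $r$- and $r^-$-successors so as to respect every $\mathsf{Funct}$ assertion; the absence of any fireable rule then guarantees that no negative inclusion of $\mathcal{E}$ is ever violated, so the limit structure is a genuine model. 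This yields $\overline{\mathcal{E}} = \mathcal{E} \cup \overline{\mathcal{E}}^{\,\mathrm{rules}}$ (up to the normalisation above), computed in polynomial time in $|\mathcal{E}|$. The delicate part is precisely this chase argument under functionality: one must verify that merging functional successors during the chase creates no disjointness clash beyond those already captured by rules (i)--(iii) — i.e., that the rule set is complete for emptiness of conjunctions of basic concepts. Once that is established, the size and running-time bounds follow immediately from the polynomial bound on $|\mathcal{B}|$.
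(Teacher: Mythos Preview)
Your proposal follows essentially the same architecture the paper sketches (and attributes to the cited reference): normalise the constraints, then saturate under a finite set of inference rules whose applications are polynomially bounded and individually polynomial-time, with soundness immediate from the semantics of each rule. Your presentation is in fact more explicit than the paper's, which merely outlines the steps---normal form, rule-based expansion, polynomial bound on rule applications, soundness---and defers the details to~\cite{CDLLR2005}; you additionally spell out concrete rules and sketch the completeness direction via a chase-style canonical model, which the paper's summary leaves implicit. One small wrinkle: your remark that ``functionality assertions enter the closure only through rule~(iii)'' is slightly off, since rule~(iii) as you state it does not mention functionality at all; in the DL-Lite setting functionality contributes to consistency checking rather than to the NI-closure itself, so it would be cleaner simply to say that functionality plays no role in generating negative inclusions and is handled separately---which is consistent with the paper's point~(c) about inconsistency detection by pattern search after the closure is computed.
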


The proof of~Proposition~\ref{prop:eltractable} involves defining a normal form for constraints, then showing that computing the NI-closure of a set of constraints can be reduced in linear time to computing the NI-closure of a normalized set of constraints.  Then a set of constraint inference rules is proposed and it is shown that: (a) each rule application can be decided in polynomial time and the maximum number of rule applications is polynomial in $|\mathcal{E}|$; (b) each expansion rule inserts only an inferable constraint, from which soundness follows; (c) the inconsistency of $\overline{\mathcal{E}}$ can be obtained by a simple pattern search, which can be decided also in polynomial time. By composing all steps, we have a satisfiability check performed in polynomial time. Details in~\cite{CDLLR2005}.

We now define a CQUEL formula $\phi=\tuple{\mathcal{Q}, \mathcal{U}, \mathcal{E}}$ as a conjunction of the counting sentences in $\mathcal{Q}$, the universal CQU sentences in $\mathcal{U}$ and the EL constraints in $\mathcal{E}$. The semantics of a CQUEL formula $\phi$, $D, \mathcal{I}\models \phi$, is exactly as before, with the addition of binary predicates as in regular first-order logic.

\begin{example}\label{ex3}\rm
	Consider Example~\ref{ex2}, which we now extend with a binary relation $\mathit{parentOf}(x,y)$ representing the fact that $x$ is a parent of $y$.  Then we add the following set of EL constraints, stating that a parent is a parent-of someone.
	\[ \mathcal{E} = \left\{ \forall x \left( \frac{}{} p(x) \leftrightarrow \exists y\, \mathit{parentOf}(x,y)\right) \right\}\]
\end{example}

The previous result on the existence of normal forms applies to CQUEL too, in which counting quantification is applied only to unary atomic predicates.

\begin{lemma}\label{lemma:cquelnf}
	For every CQUEL formula $\phi$ there exists a polynomial-time computable normal-form formula $\phi'=\tuple{\mathcal{Q}, \mathcal{U}, \mathcal{E}}$ such that $\phi$ is a satisfiable iff $\phi'$ is, where $\mathcal{Q}$ contains only counting sentences over unary predicates.
\end{lemma}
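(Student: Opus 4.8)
The plan is to mirror the proof of Lemma~\ref{lemma:nf}, extending the construction to handle the additional syntactic material present in a CQUEL formula, namely the EL constraints in $\mathcal{E}$. Given an arbitrary CQUEL formula $\phi=\tuple{\mathcal{Q},\mathcal{U},\mathcal{E}}$, the only obstruction to being in normal form is that a counting sentence $\exists_{\bowtie n} x\,\psi(x)$ in $\mathcal{Q}$ may have $\psi(x)$ an arbitrary Boolean combination of unary predicates (and, since we are now in CQUEL, possibly of basic concepts of the form $\exists y\,r(x,y)$ and $\exists y\,r(y,x)$) rather than a single atomic unary predicate. First I would set $\mathcal{Q}'=\emptyset$, $\mathcal{U}'=\mathcal{U}$, $\mathcal{E}'=\mathcal{E}$, and then process each counting sentence in turn: if $\psi(x)$ is already a unary atomic predicate, move $\exists_{\bowtie n} x\,\psi(x)$ unchanged into $\mathcal{Q}'$; otherwise introduce a fresh unary predicate $p_{\mathrm{new}}$, add $\exists_{\bowtie n} x\,p_{\mathrm{new}}(x)$ to $\mathcal{Q}'$, and add a defining axiom for $p_{\mathrm{new}}$.

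The key point — and the one place where care is needed beyond the CQU case — is where the defining axiom $p_{\mathrm{new}}(x)\leftrightarrow\psi(x)$ should be placed. If $\psi(x)$ mentions only unary atomic predicates, the biconditional $\forall x\,(p_{\mathrm{new}}(x)\leftrightarrow\psi(x))$ is a legitimate CQU universal sentence and goes into $\mathcal{U}'$, exactly as in Lemma~\ref{lemma:nf}. If $\psi(x)$ involves basic concepts built from binary relations, I would instead express the biconditional as two inclusion assertions. One direction, $\forall x\,(p_{\mathrm{new}}(x)\to\psi(x))$, is directly an IA when $\psi(x)$ is written as a concept (a conjunction of possibly-negated basic concepts); if $\psi(x)$ is a disjunction one first distributes, or equivalently pushes the definition down by introducing auxiliary predicates for subformulas so that each added axiom has an admissible shape. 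The converse direction $\forall x\,(\psi(x)\to p_{\mathrm{new}}(x))$ is handled the same way after rewriting $\psi(x)$ into the antecedent-basic-concept form the IA grammar demands (splitting a conjunctive antecedent into several IAs with the same consequent, splitting a disjunctive antecedent into several IAs, using fresh predicates where the strict EL grammar — basic concept on the left — would otherwise be violated). Each such rewriting introduces only finitely many new predicates and finitely many new constraints, all of the permitted IA form, so $\mathcal{E}'$ remains a legitimate set of EL constraints and $\mathcal{U}'$ a legitimate set of CQU universal sentences.

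Correctness then follows from a routine conservativity argument: any model of $\phi$ extends to a model of $\phi'$ by interpreting each fresh predicate $p_{\mathrm{new}}$ exactly as the set defined by its defining formula $\psi$, and conversely any model of $\phi'$ is already a model of $\phi$ since $\phi'$ implies $\phi$ (the new axioms only constrain the new predicates, which do not occur in $\phi$). This is the standard ``definitional extension preserves satisfiability'' observation, and I would state it in one or two sentences. For the polynomial-time bound, note that the number of counting sentences is at most $|\phi|$, each produces one fresh predicate plus a bounded-blowup batch of defining axioms whose total size is linear in the size of the corresponding $\psi$, and the subformula-introduction trick, if used, adds at most one fresh predicate per connective occurrence — so the whole transformation is linear (hence polynomial) in $|\phi|$, and the resulting $\phi'=\tuple{\mathcal{Q},\mathcal{U},\mathcal{E}}$ has $\mathcal{Q}$ containing only counting sentences over unary atomic predicates, as required.

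The main obstacle I anticipate is purely a bookkeeping one: the EL constraint grammar is asymmetric (an IA must have a single basic concept on the left), so turning a biconditional whose body is an arbitrary Boolean combination of basic concepts into legitimate IAs requires the auxiliary-predicate decomposition sketched above rather than a one-line rewrite. Verifying that this decomposition never leaves the EL fragment — in particular that no inadvertent use of a negated or compound concept on the left-hand side of an IA sneaks in — is the step that deserves the most explicit attention; everything else is the same definitional-extension reasoning already used for Lemma~\ref{lemma:nf}.
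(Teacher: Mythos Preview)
Your core idea---mirror Lemma~\ref{lemma:nf} and carry the EL constraints along---is exactly what the paper does, and the paper's proof is essentially a two-line appeal to Lemma~\ref{lemma:nf}. Where you diverge is in the long middle portion about placing defining axioms into $\mathcal{E}'$ as inclusion assertions when $\psi(x)$ involves basic concepts built from binary relations. That case never arises: in the paper's definition, a CQUEL formula is a triple $\tuple{\mathcal{Q},\mathcal{U},\mathcal{E}}$ in which the counting sentences in $\mathcal{Q}$ are still CQU counting sentences, i.e.\ $\psi(x)$ is a Boolean combination of \emph{unary atomic predicates only}. Binary relations enter the picture solely through the EL constraints $\mathcal{E}$, which are already in their own syntactic compartment and need no interaction with the normal-form rewriting of $\mathcal{Q}$.

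Consequently, every defining axiom $\forall x\,(p_{\mathrm{new}}(x)\leftrightarrow\psi(x))$ is automatically a CQU universal sentence and goes into $\mathcal{U}'$; the discussion of how to squeeze biconditionals into the asymmetric IA grammar, the auxiliary-predicate decomposition, and the worry about negated concepts on the left of an IA are all unnecessary. The paper simply applies Lemma~\ref{lemma:nf} to $\tuple{\mathcal{Q},\mathcal{U}}$ and passes $\mathcal{E}$ through unchanged (modulo the EL normal form alluded to in the proof of Proposition~\ref{prop:eltractable}). Your argument is not wrong, but it solves a harder problem than the one actually posed; once you re-read the definition of a CQUEL counting sentence, the proof collapses to the paper's short version.
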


\begin{proof}
	Following Lemma~\ref{lemma:nf}, the counting and universal sentences in $\phi$ are brought to normal form, the EL constraints of $\phi$ are also brought to normal form and added as $\mathcal{E}$ to $\phi'$.  Clearly, this can be done in polynomial time and satisfiability is preserved by Lemma~\ref{lemma:nf}.
\end{proof}

The following is a step into showing that the complexity of CQUEL is no greater than that of CQU.

\begin{lemma}\label{lemma:npU}
	The problem of deciding if a a set of formulas $\mathcal{U} \cup \mathcal{E}$ is consistent is NP-complete, where $\mathcal{U}$ is a set of CQU universal formulas and $\mathcal{E}$ is a set of EL constraints.
\end{lemma}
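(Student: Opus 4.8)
The plan is to establish both directions of NP-membership and NP-hardness, combining the propositional reduction used in Lemma~\ref{lemma:npSusceptible} with the polynomial-time tractability of EL-constraint satisfiability guaranteed by Proposition~\ref{prop:eltractable}. NP-hardness is immediate: a set $\mathcal{U}$ of CQU universal formulas alone (taking $\mathcal{E} = \emptyset$) already encodes propositional satisfiability, as in the proof of Lemma~\ref{lemma:npSusceptible}, so the hard part is membership in NP.

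For membership, the idea is that a model of $\mathcal{U} \cup \mathcal{E}$ can be witnessed by a small amount of data. First I would observe that $\mathcal{U}$, after deleting the outer $\forall x$ quantifiers and treating each unary predicate $p(x)$ and each basic concept expression $\exists y\, r(x,y)$, $\exists y\, r(y,x)$ as a fresh propositional symbol, becomes an ordinary propositional formula; the same translation applies to the inclusion assertions (IAs) in $\mathcal{E}$, which are also universal formulas over basic concepts. So the combined propositional theory has size polynomial in $|\mathcal{U} \cup \mathcal{E}|$. The nondeterministic algorithm guesses: (i) a propositional valuation $v$ over the unary predicates and the ``role-successor'' atoms $\exists y\, r(x,y)$, $\exists y\, r(y,x)$, intended to describe the type of a single generic domain element; and (ii) for the Data part of $\mathcal{E}$, the types of the (polynomially many) named constants, together with the finitely many asserted role edges $r(a,b)$. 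One then checks in polynomial time that $v$ satisfies the propositional translation of $\mathcal{U}$ together with the IAs, that the constants' types satisfy the Data assertions, and — crucially — that the guessed configuration is realizable as an actual first-order model respecting the functionality assertions (FAs) and the existential commitments encoded by the role-successor atoms set true.

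The main obstacle, and the step requiring care, is verifying that a guessed ``type'' is genuinely realizable, i.e.\ that there is an actual interpretation $(D,\mathcal{I})$ — not just a propositionally consistent label — satisfying $\mathcal{U} \cup \mathcal{E}$. This is where Proposition~\ref{prop:eltractable} does the work: a set of EL constraints is satisfiable iff its NI-closure $\overline{\mathcal{E}}$ contains no syntactic clash, and $\overline{\mathcal{E}}$ is computable in polynomial time; moreover DL-Lite models can be built canonically (a chase-style construction) so that whenever the constraints are consistent, one can add, for each element and each $\exists y\, r(x,y)$ atom demanded true, a fresh successor of the appropriate type, with FAs merely forcing such successors to be unique rather than causing failure. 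Concretely I would: (a) compute $\overline{\mathcal{E}}$ and reject if $\mathcal{E}$ is inconsistent; (b) fold the negative inclusions of $\overline{\mathcal{E}}$ and all IAs into the propositional translation of $\mathcal{U}$, obtaining a single polynomial-size propositional formula $\Phi$; (c) guess a satisfying valuation $v$ of $\Phi$ describing a generic element's type, and separately handle the bounded Data part; (d) argue that from any such $v$ consistent with $\overline{\mathcal{E}}$ one assembles a model by taking the generic element, closing under required role successors (using fresh elements, types dictated by $\overline{\mathcal{E}}$), and interpreting constants per the Data assertions, with FAs satisfiable because the canonical construction never identifies two successors unless forced. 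Soundness and completeness of this procedure follow from Proposition~\ref{prop:eltractable} (for the $\mathcal{E}$-part) and from the fact that $\mathcal{Q}$ is absent here so no cardinality counting interferes; the whole guess-and-check runs in nondeterministic polynomial time, giving NP-membership and hence, with the hardness remark, NP-completeness.
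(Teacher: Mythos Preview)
Your proposal is correct and follows essentially the same route as the paper: compute the NI-closure of the EL constraints (Proposition~\ref{prop:eltractable}), fold the resulting unary inclusion assertions into the propositional translation of $\mathcal{U}$, reduce to propositional SAT (Lemma~\ref{lemma:npSusceptible}), and then realise a model by a chase-style propagation over Data facts and forced role successors. The only cosmetic differences are that the paper first introduces fresh unary predicates standing for each existential basic concept before taking the NI-closure (you instead treat the role-successor expressions directly as propositional atoms, which is equivalent), and the paper handles the named constants by propagation from the Data facts rather than by guessing their types up front; neither changes the argument.
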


\begin{proof}
	Extend $\mathcal{E}$ into $\mathcal{E}'$, such that, for every existential constraint there is a new unary predicate equivalent to it; clearly this extension can be done in linear time and the input set $\mathcal{U} \cup \mathcal{E}$ is satisfiable iff $\mathcal{U} \cup \mathcal{E}'$ is. Compute the NI-closure $\overline{\mathcal{E}'}$ in polynomial time, by Proposition~\ref{prop:eltractable}.  The inconsistency of $\overline{\mathcal{E}'}$ can be decided in polynomial time, and if it is inconsistent so is the input set.

	If $\overline{\mathcal{E}'}$ is consistent, we construct $\mathcal{U}'$ by extending $\mathcal{U}$ with the IAs in $\overline{\mathcal{E}'}$ which contain only unary predicates; that is, there are no binary formulas in $\mathcal{U}'$. By Lemma~\ref{lemma:npSusceptible}, the coherence detection of $\mathcal{U}'$, and thus its satisfiability, is an NP-complete problem.
	
	Clearly, if $\mathcal{U}'$ is unsatisfiable, so is the input set.  If it is satisfiable, then we claim that $\mathcal{U} \cup \mathcal{E}$ is also satisfiable.  In fact, if $\mathcal{U}'$ is satisfiable, there is a model for it and by the proof of Lemma~\ref{lemma:npSusceptible} there is a model $(D,\mathcal{I})$ satisfying $\mathcal{U}'$. We extend this model in the following way.  Create a set $S$ of facts, initially empty.  For each fact $p(a)$ or $r(a,b)$ in $\overline{\mathcal{E}'}$, add this fact to $S$ and start the update propagation process.
	
	The update propagation process consists of the following.  If there is a constant $a$ in $S$ for which $\mathcal{I}(a)$ is not defined, create a new element and update $\mathcal{I}$ with  the constant and predicate interpretation; then propagate this update up the IA chain with constraints $\forall x\,(B(x) \to C(x)) \in \overline{\mathcal{E}'}$.  We have to deal with four cases. If $C(a) = p_i(a)$, add $p_i(a)$ to $S$ and propagate this change.  If $C(a) = \lnot p_i(a)$, do not update $\mathcal{I}$; due to the consistency of $\overline{\mathcal{E}'}$, we know that $p_i(a)$ will never be added to $S$.  If $C(a) = \exists y r_j(a,y)$ and there is no pair $(\mathcal{I}(a),d) \in I(r_j)$ skolemize $\exists y r_j(a,y)$ by taking a fresh constant $a'$, adding $r_j(a,a')$ to $S$, and propagate.  If $C(a) = \lnot \exists y r_j(a,y)$, do not update $\mathcal{I}$; again due to the consistency of $\overline{\mathcal{E}'}$, we know that $r_j(a,b)$ will never be added to $S$.  Similarly, no violation of a functionality assertion can occur, due to the consistency of $\overline{\mathcal{E}'}$.
	
	As the number of possible updates is finite, this update propagation process finishes in a finite number of steps, and we end up with an updated model $(D,\mathcal{I})$ that satisfies both $\overline{\mathcal{E}'}$ and $\mathcal{U}'$, and thus satisfies the input set, as desired.
\end{proof}


The proof of Lemma~\ref{lemma:npU} gives us an Algorithm~\ref{algo:jointSAT} to determine the joint satisfiability of $\mathcal{U} \cup \mathcal{E}$. Line~\ref{line:SAT} employs a SAT-solver, such as~\cite{ES2003,Bie2014}.

\begin{algorithm}[t] 
	\caption{JointSAT$(\mathcal{U},\mathcal{E})$
		\label{algo:jointSAT}}
	\textbf{Input:} A set of CQU universal sentences $\mathcal{U}$ and a set of EL constraints $\mathcal{E}$.
	
	\textbf{Output:} If satisfiable, return a valuation representing a susceptible term; or ``No'', if unsatisfiable.
	
	{\small
		\begin{algorithmic}[1]
			\STATE Extend $\mathcal{E}$ into $\mathcal{E}'$, adding for every existential constraint a new unary predicate equivalent to it;
			\STATE Compute the NI-closure $\overline{\mathcal{E}'}$;
			\IF{$\overline{\mathcal{E}'}$ is inconsistent} 
				\STATE \textbf{return} ``No'';
			\ENDIF
			\STATE Extend $\mathcal{U}$ into $\mathcal{U}'$, adding the IAs in $\overline{\mathcal{E}'}$ which contain only unary predicates;
			\STATE Transform $\mathcal{U}'$ into a propositional formula $\alpha$, removing the quantifiers and variables;
			\STATE Apply SAT solver to $\alpha$; \label{line:SAT}
			\IF{$\alpha$ is satisfiable} 
				\STATE \textbf{return} satisfying valuation; 
			\ELSE 
				\STATE \textbf{return} ``No'' 
			\ENDIF
		\end{algorithmic}
	}
\end{algorithm}

The basic idea of Algorithm~\ref{algo:jointSAT} is to compose a formula to submit it to a SAT solver.  For that, the NI-closure of the input set of EL constraints is first computed.  If that already shows the problem is unsatisfiable, return.  Otherwise construct a propositional formula based on the ``propositional core'' of $\mathcal{U}$ and $\mathcal{E}$.  The final solution is obtained from applying a SAT-solver to this propositional formula.

For the rest of this work we always assume that formulas are in normal form. In the following, we look at CQUEL satisfiability in terms of integer linear algebra.  

\subsection{Algebraic Formulation of CQUEL-SAT}
\label{sec:algebra}

Consider a normal form CQUEL formula $\phi=\tuple{\mathcal{Q},\mathcal{U}, \mathcal{E}}$ whose satisfiability we want to determine.  Consider the $k = |\mathcal{Q}|$ unary predicates occurring in $\mathcal{Q}$, $p_1(x), \ldots, p_k(x)$;  as in the CQU case, there  are $2^k$ \emph{elementary terms} of the form $e(x)  = \lambda_1(x)  \land  \ldots \land  \lambda_k(x)$, where each $\lambda_i(x)$  is either  $p_i(x)$ or  $\lnot p_i(x)$;  an elementary term $e(x)$  is called  \emph{susceptible} if it  is consistent  with $\mathcal{U} \cup \mathcal{E}$ ,  that  is,  the set  $\{\exists  x  e(x)\}  \cup \mathcal{U} \cup \mathcal{E}$ has a model.  Only susceptible elementary terms  may be interpreted as  non-empty elementary subsets, otherwise the interpretation falsifies $\mathcal{U} \cup \mathcal{E}$.

An  integer linear  algebraic presentation  of CQUEL-SAT is based on encoding each elementary term $e(x)$ as a $\{0,1\}$-vector $e$ of size $k$, in which $e_i = 1$ if  $\lambda_i$ is $p_i$ in $e(x)$ and $e_i=0$ otherwise.   We consider only the set of $k_m$ susceptible elementary terms, $0 \leq k_m \leq 2^k$.  Let $A$ be a $k \times k_m$ $\{0,1\}$-matrix, where each column encodes a susceptible elementary term; note that the $i$th line corresponds to the $i$th counting quantifier expression in $\mathcal{Q}$.  Let the $i$th element in $\mathcal{Q}$  be $\exists_{\bowtie_i n_i}  x\, p_i(x)$, $\bowtie_i \in\!  \{\leq,\geq\}$; let $b$ be a $k \times 1$ integer vector, such that $b_i = n_i$, and let $x$ be a $k_m \times 1$ vector of integer variables. Then  the  potentially  exponentially   large  integer  linear  system  that corresponds to the CQUEL-SAT problem $\phi=\tuple{\mathcal{Q},\mathcal{U}, \mathcal{E}}$ is:
\begin{eqnarray}
	\nonumber
	A x &\bowtie& b\\
	\label{eq:CQUELSAT}
	x & \geq & 0 \\
	\nonumber
	x_j &\makebox[0pt][l]{integer}
\end{eqnarray}

\begin{lemma}\label{lemma:cqusat}
	A normal form $\phi=\tuple{\mathcal{Q},\mathcal{U}, \mathcal{E}}$ is CQUEL satisfiable iff  its corresponding system given by~\eqref{eq:CQUELSAT} has a solution.
\end{lemma}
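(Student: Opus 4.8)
The plan is to prove both directions of the equivalence by translating between models of $\phi=\tuple{\mathcal{Q},\mathcal{U},\mathcal{E}}$ and nonnegative integer solutions of \eqref{eq:CQUELSAT}, using the susceptible elementary terms as the bridge.

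\textbf{From a model to a solution.} Suppose $D,\mathcal{I} \models \phi$. Since $\phi$ is in normal form, the counting sentences in $\mathcal{Q}$ are exactly of the form $\exists_{\bowtie_i n_i} x\, p_i(x)$ with $p_i$ atomic unary. The $k$ predicates $p_1,\ldots,p_k$ partition $D$ into at most $2^k$ domain elementary subsets $E_e$, one for each elementary term $e(x)$; as noted in the excerpt, only susceptible $e$ can have $E_e \neq \emptyset$, since otherwise $\mathcal{U}\cup\mathcal{E}$ would be falsified. Define $x_j = |E_{e^{(j)}}|$ for each of the $k_m$ susceptible elementary terms $e^{(j)}$ (using the finite model property, Proposition~\ref{prop:pratt}, we may assume $D$ is finite, so these cardinalities are finite nonnegative integers). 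For each row $i$, the set of elements satisfying $p_i(x)$ is the disjoint union of those $E_{e^{(j)}}$ with $e^{(j)}_i = 1$, so $|\{d : d \in \mathcal{I}(p_i)\}| = \sum_j A_{ij} x_j = (Ax)_i$. The counting sentence $\exists_{\bowtie_i n_i} x\, p_i(x)$ holds iff this cardinality $\bowtie_i n_i = b_i$, giving $Ax \bowtie b$; and $x \geq 0$, integer, is immediate. Hence \eqref{eq:CQUELSAT} has a solution.

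\textbf{From a solution to a model.} Conversely, suppose $x$ is a nonnegative integer solution of \eqref{eq:CQUELSAT}. For each susceptible elementary term $e^{(j)}$ fix, by the definition of susceptibility, a model of $\{\exists x\, e^{(j)}(x)\} \cup \mathcal{U} \cup \mathcal{E}$; better, by the argument in the proof of Lemma~\ref{lemma:npU} (and Lemma~\ref{lemma:npSusceptible}), there is a single model $(D_0,\mathcal{I}_0)$ of $\mathcal{U}\cup\mathcal{E}$ in which every susceptible elementary subset is realized. The construction now builds $D$ as a disjoint union: for each $j$ put $x_j$ distinct copies of a domain element whose $p_1,\ldots,p_k$-type is $e^{(j)}$, each carrying along the local ``gadget'' of binary-relation witnesses needed to satisfy $\mathcal{U}\cup\mathcal{E}$ (as in the skolemization/update-propagation step of the proof of Lemma~\ref{lemma:npU}), plus the named constants and their $\mathcal{E}$-data. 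One checks that $\mathcal{U}$ and $\mathcal{E}$ are still satisfied — universal sentences and IAs are preserved under this kind of disjoint union because they are Horn-like in the relevant respect, and functionality assertions are preserved because each gadget is self-contained — and that for each row $i$, $|\{d \in \mathcal{I}(p_i)\}| = (Ax)_i \bowtie_i b_i$, so every counting sentence in $\mathcal{Q}$ holds. Thus $D,\mathcal{I} \models \phi$.

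\textbf{Main obstacle.} The forward direction is essentially bookkeeping with partitions and cardinalities. The reverse direction is where the real work lies: one must combine the (possibly many) local witness structures required by the EL constraints into one coherent model while hitting the prescribed cardinalities $x_j$ for the unary types, without letting the binary-relation witnesses or functionality assertions interfere across copies. I expect the cleanest route is to lean directly on the model-construction machinery already developed in the proof of Lemma~\ref{lemma:npU} — the NI-closure plus update-propagation/skolemization — applied uniformly to $x_j$ copies of each susceptible type, and to argue that the finiteness of the EL-closure guarantees the whole process terminates with a finite model. A minor subtlety to flag is the boundary cases $k_m = 0$ (no susceptible terms: then $\mathcal{U}\cup\mathcal{E}$ is unsatisfiable, matching the empty system having no solution that meets any $\geq$-constraint with positive bound, while $\leq$-only instances force the empty domain, which is excluded since $D$ must be non-empty — this needs a sentence of care) and the interaction with named constants, which must always be placed regardless of the $x_j$.
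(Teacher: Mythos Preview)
Your proposal is correct and takes essentially the same approach as the paper: both directions translate between models and integer solutions via the cardinalities of the domain elementary subsets indexed by susceptible terms. The paper's own proof is in fact much terser than yours---it neither invokes a finite-model property in the forward direction nor spells out how the $\mathcal{E}$-witnesses are assembled in the backward one---so the obstacles you carefully flag (finiteness via Proposition~\ref{prop:pratt}, binary-relation gadgets from Lemma~\ref{lemma:npU}, named constants, the $k_m=0$ boundary) go well beyond what the paper actually writes down.
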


\begin{proof}
	$(\Rightarrow)$  If $\phi$  has  an  interpretation, let $x_j$ be the number of elements in the elementary subset corresponding to the $j$th susceptible elementary term; clearly $x_j$  is a  non-negative integer.   As  all  elements  in $\mathcal{Q}$  are  satisfied,  all inequalities in $Ax \bowtie b$ are satisfied.
	
	$(\Leftarrow)$ If  system~\eqref{eq:CQUELSAT} has  a solution, we construct a finite interpretation by inserting $x_j$ elements in each subset corresponding to a susceptible elementary term.  We can then compute an  interpretation  for   all  predicates  in  $\mathcal{Q}$,  and as all inequalities  in~\eqref{eq:CQUELSAT}  are   satisfied,  so  is  $\mathcal{Q}$; furthermore,  as  only susceptible  elementary  terms  have non-zero  elements, $\mathcal{U} \cup \mathcal{E}$ is also satisfied.
\end{proof}

To  determine  if  an  elementary  term is  susceptible, apply Algorithm~\ref{algo:jointSAT}, and consider the part of the returned valuation corresponding  to  the  predicates in  $\mathcal{Q}$,  as illustrated in the following example.

\begin{example}\label{ex4}\rm
	Consider the normal form formula $\phi=\tuple{\mathcal{Q},\mathcal{U}, \mathcal{E}}$ presented in Examples~\ref{ex2} and~\ref{ex3}. The linear algebraic rendering of the problem shows it is CQUEL satisfiable:
	\[
	\begin{array}{ll@{\hspace*{-1em}}l@{\hspace*{-1em}}l}
		\begin{array}{l}
			\begin{array}{l}
				\color{mygray}\mathbf{q_1}\\ 
				\color{mygray}\mathbf{q_2}\\ 
				\color{mygray}\mathbf{q_3}
			\end{array}\\\hdashline
			\begin{array}{l}
				\color{mygray}\mathbf{g}\\ 
				\color{mygray}\mathbf{p}\\ 
				\color{mygray}\mathbf{m}\\
				\color{mygray}\mathbf{h}
		\end{array}
	\end{array}
	&
	\begin{array}{l}
		\left[
		\begin{array}{ccc}
			0 & 1 & 0 \\
			1 & 0 & 0 \\
			0 & 0 & 1
		\end{array}
		\right] \cdot
		\\ \hdashline
		~\left.
		\begin{array}[t]{ccc}
			0 & 1 & 0 \\
			1 & 0 & 1 \\
			1 & 0 & 0 \\
			0 & 1 & 1			
		\end{array}
		\right.
	\end{array}
	&
	\begin{array}{l}
		\left[
		\begin{array}{c}
			10 \\ 0 \\ 0
		\end{array}
		\right]\\
		\begin{array}[t]{c}
		~ \\ ~ \\ ~ \\ ~
		\end{array}
	\end{array}
	&
	\begin{array}{l}
		\begin{array}{cc}
		\leq & 15\\
		\geq & 10\\
		\leq &  7
		\end{array}\\
		\begin{array}[t]{c}
			~ \\ ~ \\ ~ \\ ~
		\end{array}
	\end{array}
	\end{array}
	\]
	Then first three columns $\{0,1\}$-columns of size 7 are valuations over all unary predicates satisfying $\mathcal{U} \cup \mathcal{E}$; each valuation represents an elementary domain over predicates which are assigned 1 and the complement of the  0-assigned predicates. Each line  corresponds to a predicate, indicated on the left. The top three lines  contain the quantified restrictions in $\mathcal{Q}$ and  the matrix-vector product satisfies the  counting inequalities; the last four lines correspond to the predicates whose count are not quantified in $\mathcal{Q}$.  The three  .  This solution implies that the first four conditions of Examples~\ref{ex:cquinit} and \ref{ex2} are satisfiable.
	
	However, to show that adding the last condition leads to an unsatisfiable set of sentences, we would  have  to exhaustively consider the  $2^4$ valuations over predicates $q_1, \ldots, q_4$ and  show that that exponentially large system cannot satisfy the 4 inequalities.\qed
\end{example}

The exponential size of the proof search alluded by Example~\ref{ex4} can be avoided if there is a guarantee that all satisfiable CQUEL formulas have polynomial-sized models.  In the  case of  Probabilistic  Satisfiability (PSAT), which does not have the restriction on integral solution, the existence of polynomial-size models is guaranteed by Caratheodory's Theorem~\cite{Eck93}.  In the discrete case, we have the following analogue, which provides a polynomial-sized  bound  for  models of  satisfiable  CQUEL-SAT.

\begin{proposition}[Pratt-Hartmann~\citeyear{Pra2008}, E{\'e}n and S{\"o}rensson~\citeyear{ES2006}] \label{prop:size}  
	Consider a system of inequalities of the format~\eqref{eq:CQUELSAT} that has a positive integral solution.  Then it has a positive integral solution with at most $\left( \frac{5}{2} k\log k + 1\right) $ non-zero entries.
\end{proposition}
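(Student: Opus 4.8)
The plan is to reduce the inequality system~\eqref{eq:CQUELSAT} to an \emph{equality} system over columns of bounded entries, and then to prove an integer Carath\'eodory-type bound by a pigeonhole-and-exchange argument applied to a minimal-support representation. The point of moving to an integer cone is that the resulting bound will depend only on the dimension $k$ and on the (unit) magnitude of the column entries, and \emph{not} on the magnitudes in $b$, which may be huge.

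First I would eliminate the inequalities using integer slacks. For each row $i$ with $\bowtie_i$ equal to $\leq$ I introduce a fresh slack variable whose column is $+e_i$, and for each row with $\bowtie_i$ equal to $\geq$ a slack whose column is $-e_i$. Writing $\tilde A$ for the matrix whose columns are those of $A$ together with these $k$ signed unit columns, feasibility of~\eqref{eq:CQUELSAT} is equivalent to the existence of a non-negative integer vector $(x;s)$ with $\tilde A (x;s) = b$. Every column of $\tilde A$ has entries in $\{-1,0,1\}$, so $b$ lies in the integer cone of a family of vectors of $\ell_\infty$-norm at most $1$ in $\mathbb{Z}^k$. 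Since the number of non-zero $x_j$ never exceeds the total number of non-zero entries of $(x;s)$, it suffices to bound the latter; and any non-negative integer $(x;s)$ with $\tilde A(x;s)=b$ projects to a feasible $x$ for~\eqref{eq:CQUELSAT}, so the reduced solution I obtain will automatically be a genuine positive integral solution.

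Second, among all representations $b=\sum_i \lambda_i y_i$ with the $\lambda_i$ positive integers and the $y_i$ distinct columns of $\tilde A$, I pick one whose support $t$ (the number of terms) is minimum. The heart of the argument is to show $t$ cannot be large. Consider the $2^t$ subset sums $\sum_{i\in S} y_i$ for $S \subseteq \{1,\dots,t\}$. Each coordinate of such a sum is a sum of at most $t$ numbers from $\{-1,0,1\}$, hence lies in $\{-t,\dots,t\}$, so there are at most $(2t+1)^k$ distinct subset sums. If $2^t > (2t+1)^k$, two distinct subsets $S \neq S'$ collide; discarding common indices yields disjoint sets $T_1,T_2$, not both empty, with $\sum_{i\in T_1} y_i = \sum_{j\in T_2} y_j$, equivalently a non-zero vector $\mu \in \{-1,0,1\}^t$ with $\sum_i \mu_i y_i = 0$. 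Then $b = \sum_i (\lambda_i - \theta\mu_i) y_i$ for every $\theta$; choosing the sign and magnitude of $\theta$ so as to drive some coefficient to zero while keeping all of them non-negative strictly decreases the support, contradicting minimality.

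Third, the contradiction forces $2^t \le (2t+1)^k$, i.e.\ $t \le k\log_2(2t+1)$. A standard bootstrapping of this self-referential inequality (substituting a first crude bound back into the right-hand side) collapses it comfortably to $t \le \tfrac52 k\log k + 1$, with the finitely many small values of $k$ checked directly. Restricting the minimal representation to the original coordinates then yields a feasible non-negative integer $x$ for~\eqref{eq:CQUELSAT} whose number of non-zero entries is at most $t$, which is exactly the claim. I expect the pigeonhole-plus-exchange step to be the main obstacle, since it is what converts the trivial $2^k$ bound on the number of susceptible types into a polynomial one; pinning down the precise constant $\tfrac52$ (rather than merely $O(k\log k)$) is the delicate bookkeeping, and is precisely where the analyses of Pratt-Hartmann~\cite{Pra2008} and E\'en and S\"orensson~\cite{ES2006} are invoked, playing for the discrete case the role that Carath\'eodory's Theorem~\cite{Eck93} plays for PSAT.
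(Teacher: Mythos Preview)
The paper does not actually prove Proposition~\ref{prop:size}; it is stated with attribution to Pratt-Hartmann~\cite{Pra2008} and E\'en and S\"orensson~\cite{ES2006} and then used as a black box to obtain Lemma~\ref{lemma:size}. So there is no in-paper proof to compare against.

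That said, your proposal is a faithful reconstruction of the argument that appears in Pratt-Hartmann~\cite{Pra2008}: the reduction to an integer-cone membership problem with $\{-1,0,1\}$-columns via slacks, the pigeonhole collision on the $2^t$ subset sums inside a box of size $(2t+1)^k$, and the exchange step driving a coefficient to zero are exactly the ingredients used there. The only point I would flag is the final numerical step: from $t \le k\log_2(2t+1)$ one certainly gets $t = O(k\log k)$, but extracting the specific constant $\tfrac{5}{2}$ (and the ``$+1$'') requires the careful case analysis carried out in~\cite{Pra2008}; your sketch defers this to the cited sources, which is appropriate given that the present paper does the same.
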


As presented in Algorithm~\ref{algo:jointSAT}, the satisfiability of $\mathcal{U}\cup\mathcal{E}$ can be represented by a $\{0,1\}$-valuation representing a susceptible term over its predicates.   Let $\{0,1\}$-matrix $A$ be  as in~\eqref{eq:CQUELSAT}; $A$'s $j$th column $A^j$ is  \emph{satisfying} if it represents the bits of a valuation returned by Algorithm~\ref{algo:jointSAT} on input $\mathcal{U}\cup\mathcal{E}$ restricted to $\mathcal{Q}$'s.  Lemma~\ref{lemma:cqusat} and Proposition~\ref{prop:size} yield the following.

\begin{lemma}  \label{lemma:size}  
	Consider  a normal  form  CQUEL-SAT  instance $\phi=\tuple{\mathcal{Q},\mathcal{U} \mathcal{E}}$. Then $\phi$ is satisfiable iff there exists a solvable system of inequalities of the form
	\begin{align}\label{eq:ineq}
	A_{k \times k_{m}} \cdot x_{k_{m} \times 1} & \bowtie ~b_{k \times 1} 
	\end{align}
	where $k_{m} \leq \left\lceil \frac{5}{2} (k\log k + 1) \right\rceil$, $A$ is a $\{0,1\}$-matrix whose columns satisfy $\mathcal{U}\cup\mathcal{E}$. \qed
\end{lemma}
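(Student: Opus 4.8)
The plan is to combine Lemma~\ref{lemma:cqusat}, which reduces CQUEL-satisfiability to the solvability of the (possibly exponentially large) system~\eqref{eq:CQUELSAT}, with the sparse-solution bound of Proposition~\ref{prop:size}. First I would argue the forward direction: if $\phi=\tuple{\mathcal{Q},\mathcal{U},\mathcal{E}}$ is satisfiable, then by Lemma~\ref{lemma:cqusat} the full system $Ax\bowtie b$, $x\geq 0$, $x$ integral has a nonnegative integral solution, where $A$ ranges over all $k_m\leq 2^k$ susceptible elementary terms. Apply Proposition~\ref{prop:size} to this solution to obtain a new nonnegative integral solution $x'$ with at most $\left(\frac{5}{2}k\log k + 1\right)$ nonzero entries. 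Restrict attention to the columns of $A$ on which $x'$ is nonzero: this gives a submatrix $A'$ with $k_m'\leq \left\lceil \frac{5}{2}(k\log k+1)\right\rceil$ columns (the ceiling absorbing the integrality of the count and matching the statement's bound), each of which is still a susceptible elementary term, hence by definition of ``satisfying'' each column encodes a valuation of the form returned by Algorithm~\ref{algo:jointSAT} on $\mathcal{U}\cup\mathcal{E}$ restricted to the predicates in $\mathcal{Q}$. The restricted system $A'x'\bowtie b$ is then a solvable instance of the form~\eqref{eq:ineq}.

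For the converse direction, suppose there is a solvable system of the form~\eqref{eq:ineq} with $A$ a $\{0,1\}$-matrix all of whose columns satisfy $\mathcal{U}\cup\mathcal{E}$. Each such column, being a satisfying valuation over the predicates of $\mathcal{Q}$, corresponds to a susceptible elementary term (by the correctness of Algorithm~\ref{algo:jointSAT}, which is in turn justified in the proof of Lemma~\ref{lemma:npU}). Thus $A$ is a submatrix of the full matrix appearing in~\eqref{eq:CQUELSAT}: pad the solution vector with zeros in the coordinates corresponding to the omitted susceptible elementary terms, and one obtains a solution to the full system~\eqref{eq:CQUELSAT}. By Lemma~\ref{lemma:cqusat} this means $\phi$ is CQUEL satisfiable.

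The main subtlety I expect to have to handle carefully is the bookkeeping around the word ``satisfying'' for columns: the statement insists that the columns of the small matrix $A$ in~\eqref{eq:ineq} are exactly those $\{0,1\}$-vectors returned by Algorithm~\ref{algo:jointSAT}, so I need the equivalence ``column is a susceptible elementary term $\iff$ column is a satisfying valuation in the sense of Algorithm~\ref{algo:jointSAT}'', which is precisely what the proof of Lemma~\ref{lemma:npU} establishes (a susceptible term restricted to $\mathcal{Q}$'s predicates is extendable to a model of $\mathcal{U}\cup\mathcal{E}$, and conversely). Beyond that, the argument is essentially an application of two already-proven results stitched together, so no single step is genuinely hard; the one place to be careful is matching the exact constant in the bound, i.e.\ checking that Proposition~\ref{prop:size}'s $\frac{5}{2}k\log k + 1$ together with the integrality of the number of nonzero entries yields the claimed $\left\lceil \frac{5}{2}(k\log k+1)\right\rceil$ and that this rounding does not lose any solution.

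\begin{proof}
$(\Rightarrow)$ Suppose $\phi=\tuple{\mathcal{Q},\mathcal{U},\mathcal{E}}$ is satisfiable. By Lemma~\ref{lemma:cqusat}, the system~\eqref{eq:CQUELSAT} has a nonnegative integral solution, where the matrix has one column per susceptible elementary term. Apply Proposition~\ref{prop:size} to obtain a nonnegative integral solution with at most $\left(\frac{5}{2}k\log k + 1\right)$ nonzero entries; since the number of nonzero entries is an integer, it is at most $\left\lceil\frac{5}{2}(k\log k+1)\right\rceil$. Keeping only the columns on which this solution is nonzero yields a $\{0,1\}$-matrix $A$ with $k_m\leq\left\lceil\frac{5}{2}(k\log k+1)\right\rceil$ columns, each of which encodes a susceptible elementary term, hence (by the correspondence established in the proof of Lemma~\ref{lemma:npU}) a valuation of the form returned by Algorithm~\ref{algo:jointSAT} on $\mathcal{U}\cup\mathcal{E}$ restricted to the predicates of $\mathcal{Q}$; that is, the columns of $A$ satisfy $\mathcal{U}\cup\mathcal{E}$. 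The corresponding restriction of the solution is a solution of~\eqref{eq:ineq}.

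$(\Leftarrow)$ Suppose~\eqref{eq:ineq} has a solution, with $A$ a $\{0,1\}$-matrix whose columns satisfy $\mathcal{U}\cup\mathcal{E}$. By the correspondence in the proof of Lemma~\ref{lemma:npU}, each column of $A$ is a susceptible elementary term, so $A$ is a submatrix of the matrix in~\eqref{eq:CQUELSAT}. Extending the solution vector with zeros in the coordinates of the omitted susceptible elementary terms gives a nonnegative integral solution of~\eqref{eq:CQUELSAT}. By Lemma~\ref{lemma:cqusat}, $\phi$ is CQUEL satisfiable.
\end{proof}
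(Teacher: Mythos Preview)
Your proposal is correct and follows exactly the approach the paper intends: the paper itself gives no explicit proof, merely stating that ``Lemma~\ref{lemma:cqusat} and Proposition~\ref{prop:size} yield the following'' and marking the lemma with \qed. Your write-up simply spells out that combination in both directions, with the additional care of tying the ``columns satisfy $\mathcal{U}\cup\mathcal{E}$'' condition back to susceptibility via Lemma~\ref{lemma:npU}; this is more detail than the paper provides, but entirely in line with its intended argument.
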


This serves as a basis for effective algorithms for CQUEL-SAT.

\subsection{A CQUEL-SAT Solver based on Integer Linear Programming}
\label{sec:algo}

The  polynomial-size  format  of  solutions  given  by  Lemma~\ref{lemma:size} provides  a way  to reduce  a CQUEL-SAT to SAT; that is, an instance $\phi=\tuple{\mathcal{Q},\mathcal{U}}$ of  a CQUEL-SAT decision problem is  polynomially translated  to an  instance  of SAT  by encoding  the set  of inequalities in~\eqref{eq:ineq} such that the CQUEL-SAT is satisfiable iff its SAT translation is.  This approach is described in~\cite{FDB2017}, but the high number of variables in the translated SAT formulas, which is $O(k^3 \log k)$, makes this approach impractical in the critical areas of hard problems.  So a different approach, based on integer linear programming (ILP) and the branch-and-bound algorithm will be pursued.

The algebraic formulation of CQUEL-SAT on input $\phi=\tuple{\mathcal{Q},\mathcal{U}, \mathcal{E}}$ given by~\eqref{eq:CQUELSAT} is apparently  suited for Integer Linear Programming (ILP),  finding a solution  to $Ax  \bowtie  b$,  where  $x_j   \in  \mathbb{N}$.   However,  there  are  two important facts in~\eqref{eq:CQUELSAT} that have to be addressed, namely
\begin{itemize}
\item Matrix $A$ may be exponentially large.
\item As a consequence, we do not represent matrix $A$ explicitly; instead we deal with it partially and implicitly.
\end{itemize}

In fact, $A$'s columns consists of $\{0,1\}$-valuations representing susceptible terms satisfying $\mathcal{U}\cup\mathcal{E}$, which are  costly  to  compute  and  there  may  be  exponentially  many,  e.g.~when $\mathcal{U}=\mathcal{E}=\emptyset$.  To avoid these problems, we propose to solve the ILP problem via a simplified version of the  \emph{branch-and-bound} algorithm~\cite{Sch1986}, which  solves relaxed (continuous)  linear programs. As in the case of PSAT, we  generate $A$'s column as needed, in the process of \emph{column generation}~\cite{JHA1991} which takes place at each relaxed problem created by the branch-and-bound approach. For an ILP of the form~\eqref{eq:CQUELSAT}, it is  not necessary to search for  an optimal integer solution, one  only needs to find  a feasible one or show none exists.

\begin{algorithm}[t] 
  \caption{CQUELBranchAndBound$(\phi)$ \label{algo:CQUELBNB}}
  \textbf{Input:} A normal form CQUEL formula $\phi=\tuple{\mathcal{Q},\mathcal{U}, \mathcal{E}}$.

  \textbf{Output:} A solution satisfying~\eqref{eq:ineq}; or ``No'',
  if unsatisfiable.

{\small
  \begin{algorithmic}[1]
    \STATE $\mathit{CQUELSet} = \{\phi\}$
    \STATE $\mathit{SAT} = \mathrm{false}$
    \WHILE{not \textit{SAT} and $\mathit{CQUELSet}$ is not empty}
    \label{lin:loop}
        \STATE $\mathit{CQUELProblem} = \mathit{RemoveHeuristically}(\mathit{CQUELSet})$ \label{lin:heur1}
        \STATE $\mathit{solution} = \mathit{SolveRelaxedViaColGen}(\mathit{CQUELProblem})$  \label{lin:relax}
	    \IF{no \textit{solution} found} 
	      \STATE \textbf{continue} 
	    \ELSIF{integral \textit{solution}} 
	      \STATE $\mathit{SAT} = \mathrm{true}$
	    \ELSE 
	      \STATE $\textit{var} = \textit{choseBranchVar}(\mathit{solution})$ \label{lin:heur2}
	      \STATE $\mathit{newCQUELs} = \mathit{boundedProblems}(\mathit{CQUELProblem},var)$\label{lin:newprobs}
	      \STATE $\mathit{CQUELSet} = \mathit{CQUELSet} \cup \mathit{newCQUELs}$ 
	    \ENDIF
    \ENDWHILE\label{lin:endloop}
    \IF{ SAT }
      \RETURN \textit{solution}
    \ELSE
      \RETURN ``No''
    \ENDIF
  \end{algorithmic}
}
\end{algorithm}

The branch-and-bound method traverses an implicit search tree of relaxed problems. The top level of this search method  is   shown  in Algorithm~\ref{algo:CQUELBNB}. It starts in the root of the search tree with a unary set of problems containing the input CQUEL  formula, and it loops until either  a feasible integer solution to  the corresponding  linear  algebraic problem  given by~\eqref{eq:ineq}  is found or the set of problems becomes empty, in which case an unsatisfiability decision is reached. In the   main   loop    (lines   \ref{lin:loop}--\ref{lin:endloop}),  a problem is heuristically selected from   the    set   of    problems (line~\ref{lin:heur1}),  and its  \emph{relaxed version} is solved, which consists of the same problem without the restriction of integral solutions.   The heuristics implemented orders the problems according to the relaxed solutions to its parent in the tree, giving preference to solutions with the largest number of integer components.

If the relaxed problem  has no solution, it is  just removed from the set, which corresponds to closing a branch in the search tree, and the next iteration starts searching at an open branch.  If there is an integer solutio, the problem  is satisfiable and the  loop ends. Otherwise, a  solution with at least one non-integral element exists.  A  second heuristics is used to find a variable $x_{i}$ with  a non-integral solution $z_{i}$ on  which to branch (line~\ref{lin:heur2}), creating two new branches on the search tree.   This  heuristics  chooses  $x_{i^*}$  for  which  the non-integral $z_{i^*}$ is  closer to either $\left\lfloor z_{i^*} \right\rfloor$ or $\left\lceil z_{i^*} \right\rceil$.

The branching generates two  new bounded  problems $\phi'=\tuple{\mathcal{Q}',\mathcal{U}', \mathcal{E}}$, $\phi''=\tuple{\mathcal{Q}'',       \mathcal{U}'', \mathcal{E}}$ (line~\ref{lin:newprobs}),  with  the  creation   of  a  new  unary  predicate $p_{\mathrm{new}}$.  Note that the set of constraints $\mathcal{E}$ is never changed.  We make $\mathcal{U}'  = \mathcal{U}'' = \mathcal{U} \cup \{\forall   x  (p_{\mathrm{new}}(x)   \leftrightarrow  e_{i^*}(x))\}$,   where $e_{i^*}(x)$  is  the  elementary  term  corresponding  to  column  $i^*$  and $\mathcal{Q}'  =  \mathcal{Q}  \cup   \{\exists_{  \leq  \left\lfloor  z_{i^*}   \right\rfloor}  x\,p_{\mathrm{new}}(x)\}$ and  $\mathcal{Q}'' =  \mathcal{Q} \cup       \{\exists_{\geq       \left\lceil       z_{i^*} \right\rceil} x\,p_{\mathrm{new}}(x)\}$.  These  new formulas $\phi'$ and  $\phi''$ are then dealt with as integer linear problems of larger size. However,  if   their  size exceeds  the limit given  by Lemma~\ref{lemma:size}, the problem is not inserted.  

The  largest   part  of  the  processing   in \textit{CQUELBranchAndBound} occurs during the calls to the  relaxed solver (line~\ref{lin:relax}), \textit{SolveRelaxedViaColGen}$(\phi)$, in which column generation takes place. This process is analogous to that used for PSAT column generation, and it takes as input a CQUEL formula, eventually expanded by the bounding operation  and is  described  in Algorithm~\ref{algo:Relax}.   Its output  may contain some non-integral values, but  the objective function, which minimizes the solution cost has to be  0 for success to be  achieved.  Thus \textit{SolveRelaxedViaColGen}$(\phi)$  aims at  solving the  following linear program~\cite{BT1997}:

\begin{align}\label{eq:lp}
  \begin{tabular}{rl}
    minimize & $c' \cdot x$\\
    subject to & $A \cdot x \bowtie b$ and $x \geq 0$    
  \end{tabular}
\end{align}

In the linear program~\eqref{eq:lp}, $\{0,1\}$-matrix $A$'s columns consist of all possible  valuations over $k = |\mathcal{Q}|$ predicates and it  has $2^k$ columns.  The cost vector  $c$ and solution  vector  $x$  also  have  size  $2^k$, so  neither  is  represented explicitly.  Instead,  Algorithm~\ref{algo:Relax} starts with a  square matrix and iterates by generating the columns of $A$ in such a way as to decrease the objective function (lines~\ref{lin:cgloop}--\ref{lin:cgendloop}).

\begin{algorithm}[t] 
  \caption{\textit{SolveRelaxedViaColGen}$(\phi)$ \label{algo:Relax}}
  \textbf{Input:} A normal form CQUEL formula $\phi=\tuple{\mathcal{Q},\mathcal{U}, \mathcal{E}}$.

  \textbf{Output:} A relaxed solution $(A, x)$ , if it exists; or ``No'', if unsatisfiable.
  
  \renewcommand{\algorithmiccomment}[1]{\hfill{// #1}}
{\small
  \begin{algorithmic}[1]
    \STATE $A_{(0)} = I$; compute cost vector $c^{(0)}$; $x^{(0)} = b$ \label{lin:ini}
    \FOR{$s=0;~c^{(s)}{'} \cdot x^{(s)} > 0;~s\!\!+\!\!+$} \label{lin:cgloop}
      \STATE $z^{(s)} = \textit{DualSolution}(A_{(s)}, \bowtie b, c^{(s)})$ \label{lin:simplex}
      \STATE $y^{(s)} = \mathit{CQUELGenerateColumn}(z,\mathcal{U}, \mathcal{E})$ \label{lin:cg}
      \RETURN ``No'' \textbf{ if } column generation failed 
      \STATE $A_{(s+1)} = \mathit{append-column}(A_{(s)}, y^{(s)})$ \label{lin:merge}
      \STATE $c^{(s+1)} = \mathit{append}(c^{(s)},0)$ \label{lin:cost}
    \ENDFOR\label{lin:cgendloop}
    \RETURN $A_{(s)}$, $x^{(s)}$ such that $A_{(s)} x^{(s)} \bowtie b$ \COMMENT{Successful termination}
  \end{algorithmic}}
\end{algorithm}

As Algorithm~\ref{algo:Relax} is very similar to the column generation process for PSAT presented by Algorithm~\ref{algo:PSATviaSAT}, we only discuss here the main differences between the two.

As we do not have a restriction to ``add to one'' of PSAT, the initial size of $A$ is $k \times k$, and similarly the cost function $c$ starts with size $k$ and the bound vector $b$ has size $k=|\mathcal{Q}|$. As for the initialization (line~\ref{lin:ini}), $A$ receives the identity matrix $I$, and the solution $x$ receives $b$. The initialization of the $\{0,1\}$-cost vector, like in PSAT, is such that $c_j = 1$ iff column $A^j$ is $(\mathcal{U} \cup \mathcal{E})$-unsatisfiable.  The added columns will always be $(\mathcal{U} \cup \mathcal{E})$-satisfiable and receive cost 0 (line~\ref{lin:cost}).

As for the similarities, the steps within the loop are exactly the same for both algorithms, and for the same reason.  The goal of those steps is to decrease the cost function until it becomes 0, or fail if this is not possible.  

The only important difference in the loop is the column generation method.  Like in the PSAT case, it uses the dual solution $z$ to compute an inequality based on the reduced cost:
\begin{align}\label{eq:redcost}
	z' \cdot y \geq 0
\end{align}
Then it encodes the inequality \eqref{eq:redcost} to a propositional formula, which can be seen as a universal formula over unary predicates $\mathcal{U}'$.  It then calls Algorithm~\ref{algo:jointSAT} in the form $\mathit{JointSAT}(\mathcal{U} \cup \mathcal{U}',\mathcal{E})$ and if it is satisfiable, returns a valuation for its unary predicates.

\begin{theorem}
  Algorithms~\ref{algo:CQUELBNB},~\ref{algo:Relax} and $\mathit{GenerateColumn}$
  provide a decision procedure for the CQUEL-SAT problem.
\end{theorem}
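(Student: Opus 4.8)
The plan is to establish correctness in three layers, mirroring the proof for the PSAT decision procedure. The outermost reduction is supplied by Lemma~\ref{lemma:cqusat} together with Lemma~\ref{lemma:size}: the input $\phi=\tuple{\mathcal{Q},\mathcal{U},\mathcal{E}}$ is CQUEL-satisfiable if and only if the integer system~\eqref{eq:ineq} has a nonnegative integral solution whose matrix has $\{0,1\}$-columns satisfying $\mathcal{U}\cup\mathcal{E}$ and at most $k_m=\left\lceil\frac{5}{2}(k\log k+1)\right\rceil$ nonzero entries. Hence it suffices to prove that Algorithm~\ref{algo:CQUELBNB} is a sound and complete decision procedure for the feasibility of that integral system, which I would split into (i) correctness of \textit{SolveRelaxedViaColGen} for the relaxed (continuous) linear programs it is called on, and (ii) correctness of the branch-and-bound search built on top of (i).

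For (i), the argument transfers almost verbatim from Algorithm~\ref{algo:PSATviaSAT}. The $\{0,1\}$-cost vector assigns cost $1$ exactly to the columns that are not $(\mathcal{U}\cup\mathcal{E})$-satisfiable --- in particular to those initial identity columns that fail to be susceptible --- and cost $0$ to every column produced later. Each generated column is a $(\mathcal{U}\cup\mathcal{E})$-satisfiable valuation by the correctness of \textit{JointSAT} established in Lemma~\ref{lemma:npU}, and the propositional encoding of the reduced-cost inequality~\eqref{eq:redcost} guarantees that it has nonpositive reduced cost, so appending it to the simplex basis never increases the objective; conversely, column generation fails precisely when no $(\mathcal{U}\cup\mathcal{E})$-satisfiable column of nonpositive reduced cost exists, which by linear-programming duality is exactly the optimality test for the current basic solution. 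Thus the inner loop terminates, with objective value $0$ iff the restricted relaxed program is feasible; a value-$0$ solution certifies feasibility of the relaxed continuous system, and a strictly positive optimum certifies its infeasibility, since only the non-susceptible columns carry positive cost.

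For (ii), I would invoke the standard soundness and completeness of branch-and-bound for feasibility problems. At a node whose relaxed solution $z$ is fractional in coordinate $i^*$, the two children $\phi'=\tuple{\mathcal{Q}',\mathcal{U}',\mathcal{E}}$ and $\phi''=\tuple{\mathcal{Q}'',\mathcal{U}'',\mathcal{E}}$ introduce a fresh predicate $p_{\mathrm{new}}$ equivalent to the elementary term $e_{i^*}(x)$ together with the counting sentence $\exists_{\leq\lfloor z_{i^*}\rfloor}x\,p_{\mathrm{new}}(x)$, respectively $\exists_{\geq\lceil z_{i^*}\rceil}x\,p_{\mathrm{new}}(x)$; since no integral solution of the parent has $x_{i^*}$ in the open interval $(\lfloor z_{i^*}\rfloor,\lceil z_{i^*}\rceil)$, the integral feasible regions of the two children partition that of the parent, while every integral solution of a child is an integral solution of the parent. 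Therefore the search reports satisfiability iff an integral feasible solution exists somewhere in the tree, and a node is closed only by relaxed-infeasibility (which entails integral-infeasibility) or by exceeding the size bound, so the ``No'' output is sound.

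The step I expect to be the main obstacle is termination of the outer loop. Unlike PSAT, where the constraint $\sum\pi_i=1$ together with nonnegativity confines the variables to $[0,1]$, the counting bounds in~\eqref{eq:ineq} need not bound the integer variables from above (for instance when every quantifier is $\exists_{\geq n}$), so an unrestricted branch-and-bound tree could be infinite. Termination must therefore be recovered from Proposition~\ref{prop:size} and Lemma~\ref{lemma:size}: every satisfiable sub-instance still admits a model with at most $k_m$ nonzero entries, and the algorithm discards any child whose size exceeds the bound of Lemma~\ref{lemma:size}; I would argue that, since each branching step strictly increases the number of predicates while the admissible problem size is capped through that bound, only finitely many sub-instances are ever generated and the tree is finite. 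One must also check that pruning oversized problems preserves completeness --- precisely the content of Lemma~\ref{lemma:size}, since a satisfiable instance always possesses a small witness and hence need not be reached through an oversized node. Combining (i), (ii) and this termination argument shows that Algorithms~\ref{algo:CQUELBNB}, \ref{algo:Relax} and the column-generation subroutine jointly decide CQUEL-SAT.
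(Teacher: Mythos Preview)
Your proposal is correct and follows the same line as the paper, which in fact gives only a one-sentence sketch deferring to the standard correctness of branch-and-bound for ILP (noting that here one seeks mere feasibility rather than optimality) and explicitly omits all details. Your layered argument---reducing via Lemmas~\ref{lemma:cqusat} and~\ref{lemma:size}, verifying the column-generation inner loop as in the PSAT case, and then invoking branch-and-bound soundness/completeness with the size bound of Lemma~\ref{lemma:size} for termination and pruning---supplies exactly the details the paper leaves out, so there is nothing to correct.
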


\begin{proof}  \textsc{(Sketch)}   The  proof  is  a   simplification  of  the
  correctness of  the branch-and-bound  method for ILP~\cite{Sch1986},  due to
  the  fact that  CQUEL-SAT requires  only  a single  feasible integer  solution
  instead  of searching  for optimality  in  the lattice  of feasible  integer
  solutions.  Details omitted.
\end{proof}

There is an open source implementation\footnote{Available at \url{http://cqu.sourceforge.net} .} for CQU, that is CQUEL with $\mathcal{E} = \emptyset$.  It was developed in C++ and employs an open source linear programming solver\footnote{\url{http://www.coin-or.org/}} and the MiniSAT solver\footnote{\url{http://minisat.se/}} as part of the column generation process.  More details can be found at~\cite{FDB2017}.

\subsection{Future Challenges for Counting Quantifiers}
\label{sec:coherentcount}

We have shown that similar methods can be applied both for Probabilistic Logic and for Counting Quantifiers over unary predicates.  Three immediate challenges are suggested by this work.

The first one, which is a direct application of the expansion from CQU to CQUEL, is the application of the counting quantifier techniques developed here to the domain of Description Logics.  In particular, it would be nice to have an implementation for CQUEL and its deployment together with the existing tools for Description Logic Reasoning.

The second challenge is more foundational and comes directly from a comparison between results for Probabilistic Logic and Counting Quantifiers, namely, the search for a de Finetti-like notion of coherence for counting quantifiers.  In other words, this research topic searches for a betting foundation on counting quantifier statements in analogy to the Probabilistic Logic results described in Section~\ref{sec:coherentpsat}.


The third challenge also comes by analogy with Probabilistic Logic, and it has to do with the existence of inconsistency measures for logic bases involving counting quantifiers.  This future investigation may take into consideration that it is possible that the analogy between probabilities and discrete counting breaks at this level, for the simple reason that inconsistency measures for probabilistic bases are continuous and may be approached by convex optimization methods~\cite{DF2015}, while counting quantifier treatment is discrete and based on integer linear programming techniques, are non-convex.

\section{\Luka\ Infinitely-valued Logic and Probabilities}
\label{sec:lukaAndProb}
	
\Luka\ Infinitely-valued logic is arguably one of the best studied many-valued logics~\cite{CDM2000}.  It has several interesting properties; semantically, formulas can be seen as taking values in the interval $[0,1]$; the semantics is truth functional, so then truth value of compound formula is function of the truth values of its components, and that function is continuous over the interval $[0,1]$; in fact, it is piecewise linear.  When truth values of propositional symbols are restricted to $\{0,1\}$, the semantics of formulas is that of classical logic; furthermore, it possesses a well developed proof-theory and an algebraic semantics base on MV-algebras. 

We present the essentials of Lukasiewicz (always propositional) logic ($\textrm{\L}_\infty$) and its underlying probability theory. We then introduce the notion of LIP-coherence, which is inspired on de Finetti?s notion of a coherent betting system. We defineand solve the LIP-satisfiability problem mimicking our analysis of the PSAT and
CQUEL-SAT problems.

\subsection{\Luka\ Infinitely-valued Logic}
\label{sec:luka}

Consider a finite set of propositional symbols $\mathcal{P} = \{p_1, \ldots, p_n\}$. We employ $\odot$ and $\oplus$ for \Luka\ conjunction and disjunction and write $\lnot$ for negation. Usually, only $\lnot$ and $\oplus$ are considered basic connectives. So all propositional symbols are formulas and if $\alpha$ and $\beta$ are formulas in  $\luka$, so are $\lnot \alpha$ and $\alpha \oplus \beta$.  Define $\alpha \odot \beta$  as $\lnot(\lnot\alpha \oplus \lnot\beta)$ and \Luka\ implication $\alpha \to \beta$ as $\lnot\alpha \oplus \beta$; it is also,possible to express the lattice connectives  $\alpha \land \beta$ as $\lnot(\alpha \oplus \lnot \beta) \oplus \alpha$ and $\alpha \lor \beta$ as $\lnot(\lnot\alpha \land \lnot\beta)$.

The semantics of $\luka$-formulas is given in terms of the rational (or real) interval $[0,1]$.  A \emph{valuation} is a map $v: \mathcal{P} \to [0,1]$ which is truth functionally extended to all \luka-formulas in the following way:

\[
\begin{array}{r@{~=~}l}
	v(\lnot \alpha) & 1 - v(\alpha )\\
	v(\alpha \oplus \beta) & \min( 1, v(\alpha) + v(\beta )) \\
	v(\alpha \odot \beta) & \max( 0, v(\alpha) + v(\beta ) - 1)
\end{array}
\]

The third line above can, of course, be obtained from the definition of $\odot$ in terms of $\lnot$ and $\oplus$. Similar truth functional expressions can be obtained for the other connectives:
\begin{align*}
	v(\alpha \to \beta ) =& min(1, 1- v(\alpha)+v(\beta))\\
	v(\alpha \land \beta) =& \min(v(\alpha),v(\beta))\\
	v(\alpha \lor \beta) =& \max(v(\alpha),v(\beta))
\end{align*}  

A formula $\alpha$ is \emph{valid} if $v(\alpha) = 1$ for every valuation $v$,  a formula $\alpha$ is \emph{satisfiable} (sometimes called 1-satsfiable) if there exists a $v$ such that $v(\alpha) = 1$; otherwise it is \emph{unsatisfiable}. A set of formulas $\Gamma$ is satisfiable if there exists a $v$ such that $v(\gamma) = 1$ for all $\gamma \in \Gamma$.  If $v(\alpha) = 1$, we say that $\alpha$ is satisfied by $v$.

It mis easy to see that $\alpha \to \beta$ is satisfied by $v$ iff $v(a) \leq v(b)$. If we define  $\alpha \leftrightarrow \beta$ as an abbreviation for $(\alpha \to \beta) \land (\beta \to \alpha)$, it follows that $\alpha \leftrightarrow \beta$ is satisfied by $v$ iff $v(\alpha) = v(\beta)$.

\subsection{\LIP\ Logic and \texorpdfstring{\luka}{L}-Coherence}
\label{sec:LIP}

\luka-valuations over propositional symbols $\{p_1, \ldots, p_n\}$ can be seen as points in and $n$-cube $[0,1]^n$. To apply the ideas and methods of Quantitative Logic Reasoning to probabilistic \luka, we follow the approach and terminology of~\cite{Mun2011}. Define a \emph{convex combination} of a finite set of valuations $v_1, \cdots, v_m$ as a function on formulas into $[0,1]$ such that
\[
	C(\alpha) = \lambda_1 v_1(\alpha) + \cdots + \lambda_m v_m(\alpha)
\]
where $\lambda_i \geq 0$ and $\sum_{i=1}^{m} \lambda_i = 1$.

In this sense, we define a \LIP{} (LIP) assignment as an expression of the form 
\[
	\Sigma = \left\{C(\alpha_i) = q_i ~|~ q_i \in [0,1], 1 \leq i \leq k\frac{}{}\right\}.
\]  
The LIP assignment is \emph{satisfiable} if there exists a convex combination $C$ on a set of valuations in the $n$-cube that jointly verifies all inequalities in $\Sigma$. This can be seen in linear algebraic terms as follows.  Given a LIP assignment $\Sigma$, let $q = (q_1, \ldots, q_k)'$ be the vector of values assigned in $\Sigma$, and suppose we are given \luka-valuations $v_1, \ldots, v_m$  and let $\lambda = (\lambda_1, \ldots, \lambda_m)'$ be a vector of $C$-coefficients.  Then consider the $k \times m$ matrix $A = [a_{ij}]$ where $a_ij = v_j(\alpha_i)$.  Then $\Sigma$ is satisfiable if there are $v_1, \ldots, v_m$ and $\lambda$ such that the set of algebraic constrains~\eqref{eq:LIPrestrictions}:

\begin{eqnarray}
	A \cdot \lambda &=& q			\nonumber \\
	\mbox{$\sum \lambda_j$} &=& 1	\label{eq:LIPrestrictions}	\\
	\lambda &\geq& 0				\nonumber
\end{eqnarray}

\noindent Conditions~\eqref{eq:LIPrestrictions} are analogous to the PSAT constraints in~\eqref{eq:PSATrestrictions}.

Note that the number $m$ of columns in $A$ is initially unknown, but the following consequence of Carathéodory's Theorem~\cite{Eck93} yields that if \eqref{eq:LIPrestrictions} has a solution, than it has a ``small'' solution.

\begin{proposition}\label{pr:LIPsmall}
	If a set of restrictions of the form \eqref{eq:LIPrestrictions} has a solution, then there are $k+1$ columns of $A$ such that the system $A_{(k+1)\times(k+1)}\lambda =q_{(k+1) \times 1}$ has a solution $\lambda \geq 0$. \qed
\end{proposition}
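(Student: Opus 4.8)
The plan is to invoke Carath\'eodory's Theorem in its conical/affine form, exactly as in Proposition~\ref{fact:NP} for PSAT. First I would reformulate the constraint system~\eqref{eq:LIPrestrictions} so that the normalization $\sum \lambda_j = 1$ is absorbed into the matrix: let $\bar{A}$ be the $(k+1) \times m$ matrix obtained by prepending a row of $1$'s to $A$, and let $\bar{q} = (1, q_1, \ldots, q_k)'$. Then a solution to~\eqref{eq:LIPrestrictions} is exactly a vector $\lambda \geq 0$ with $\bar{A}\lambda = \bar{q}$; equivalently, $\bar{q}$ lies in the convex cone generated by the columns $\bar{A}^1, \ldots, \bar{A}^m$ of $\bar{A}$, which are vectors in $\mathbb{R}^{k+1}$.

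Next I would apply Carath\'eodory's Theorem for cones: if a point lies in the cone generated by a set of vectors in $\mathbb{R}^{k+1}$, then it lies in the cone generated by some linearly independent subset of those vectors, hence by at most $k+1$ of them. Applying this to $\bar{q}$ and the columns of $\bar{A}$, there is a set $J \subseteq \{1, \ldots, m\}$ with $|J| \leq k+1$ and nonnegative coefficients $\lambda_j$, $j \in J$, such that $\sum_{j \in J} \lambda_j \bar{A}^j = \bar{q}$. Reading off the first coordinate gives $\sum_{j \in J} \lambda_j = 1$, and the remaining coordinates give $\sum_{j \in J} \lambda_j A^j = q$. If $|J| < k+1$ we pad $J$ with arbitrary further column indices assigned coefficient $0$, so that exactly $k+1$ columns are used; collecting those columns into a $(k+1) \times (k+1)$ submatrix $A_{(k+1) \times (k+1)}$ and the coefficients into $\lambda \geq 0$ yields the claimed solvable square system.

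I do not expect a serious obstacle here: the statement is the verbatim analogue of Proposition~\ref{fact:NP}, and the only thing to be careful about is bookkeeping — making sure the normalization row is handled so that the count is $k+1$ (the number of $\Sigma$-constraints plus one) rather than $k$, and that padding with zero-coefficient columns is legitimate (it is, since any columns of $A$ are valid, being values $v_j(\alpha_i)$ of genuine \luka-valuations, and adding them with coefficient $0$ changes nothing). The one genuinely \luka-specific remark worth making explicit is that the columns produced are still of the required form, i.e.\ each selected column arises from an actual valuation $v_j$ in the $n$-cube; this is immediate because we only ever discard or zero out columns from the original $A$, never create new ones. Hence if~\eqref{eq:LIPrestrictions} is solvable at all, it is solvable with at most $k+1$ valuations, which is the assertion. \qed
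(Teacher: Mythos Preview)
Your proposal is correct and matches the paper's approach: the paper states this proposition without proof, merely introducing it as ``the following consequence of Carath\'eodory's Theorem,'' and your argument is exactly the standard unpacking of that citation via the conical form of Carath\'eodory applied to the augmented $(k+1)$-row system. There is nothing to compare beyond this---you have simply supplied the details the paper omits.
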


Given a set of pairs of formulas and bets $\tuple{\alpha_1,q_1},\ldots,\tuple{\alpha_k,q_k}$, we say that there is a \emph{\luka-Dutch book} against the bookmaker (Alice) if the gambler (Bob) can place stakes $\sigma_1, \ldots, \sigma_k \in \mathbb{Q}$ in such a way that, for all valuations $v$

\[\sum_{i=1}^{k} \sigma_i ( q_i - v(\alpha_i) ) < 0. \]

Intuitively, in a Dutch Book, Alice's bets $C(\alpha_1 ), \ldots, C(\alpha_k )$ result in financial disaster for her, for any possible world $v$.  

\begin{definition} \rm
	Given a probability assignment to propositional formulas $\{C(\alpha_i) = q_i| 1 \leq i \leq k\}$, the LIP assignment is \emph{\luka-coherent} if there are no Dutch Books against it.
\end{definition}

The following extension of de Finetti?s Dutch book theorem characterizes coherent LIP-assignments:

\begin{proposition}[Mundici~\citeyear{Mun2006}]\label{pr:lukacoherent}
	Given a LIP assignment $\Sigma=\{C(\alpha_i) = q_i| 1 \leq i \leq k\}$, the following are equivalent:
	\begin{enumerate}[(a)]
		\item $\Sigma$ is a \luka-coherent assignment.
		\item $\Sigma$ is a satisfiable LIP assignment.
	\end{enumerate}
\end{proposition}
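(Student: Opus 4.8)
The plan is to establish the equivalence between $\luka$-coherence and LIP-satisfiability by a separation argument entirely analogous to the classical de Finetti theorem, exploiting the fact that $\luka$-valuations, though infinitely many, span a compact convex body once we pass to the relevant finite-dimensional projection. First I would observe that for the fixed finite tuple of formulas $\alpha_1, \ldots, \alpha_k$ in the assignment $\Sigma$, each valuation $v$ on the $n$-cube determines a point $\mathbf{v} = (v(\alpha_1), \ldots, v(\alpha_k)) \in [0,1]^k$. Let $W \subseteq [0,1]^k$ be the image of the $n$-cube of valuations under this map. Since each $v(\alpha_i)$ is a continuous (piecewise-linear) function of the $n$ coordinates of $v$, and the $n$-cube is compact, $W$ is a compact subset of $\mathbb{R}^k$; let $K = \mathrm{conv}(W)$ be its convex hull, which is then a compact convex set (convex hull of a compact set in finite dimensions is compact). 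The key translation is: a convex combination $C$ satisfying $C(\alpha_i) = q_i$ for all $i$ exists if and only if the point $q = (q_1, \ldots, q_k)$ lies in $K$. One direction is immediate from the definition of $C$ as a convex combination of valuations; the other uses Proposition~\ref{pr:LIPsmall} (Carath\'eodory) to write any point of $K$ as a convex combination of at most $k+1$ points of $W$, each of which is realized by an actual $\luka$-valuation, which is exactly what a solution to~\eqref{eq:LIPrestrictions} provides.

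Next I would handle the two implications. For (b) $\Rightarrow$ (a): if $\Sigma$ is satisfiable, take the witnessing convex combination $C = \sum_j \lambda_j v_j$. Then for any stakes $\sigma_1, \ldots, \sigma_k$, summing $\sum_i \sigma_i(q_i - v(\alpha_i))$ against the weights $\lambda_j$ gives $\sum_j \lambda_j \sum_i \sigma_i(q_i - v_j(\alpha_i)) = \sum_i \sigma_i(q_i - C(\alpha_i)) = 0$; since this weighted average is zero, not every term $\sum_i \sigma_i(q_i - v_j(\alpha_i))$ can be strictly negative, so there is no $\luka$-Dutch book. For (a) $\Rightarrow$ (b), I argue the contrapositive: suppose $\Sigma$ is not satisfiable, i.e. $q \notin K$. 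Since $K$ is closed and convex and $\{q\}$ is compact, the separating hyperplane theorem yields a vector $\sigma \in \mathbb{R}^k$ and a scalar $c$ with $\sigma \cdot q < c \leq \sigma \cdot w$ for every $w \in W$ (using that the extreme behavior of the linear functional on $K$ is attained on $W$). In particular $\sigma \cdot q < \sigma \cdot \mathbf{v}$, i.e. $\sum_i \sigma_i(q_i - v(\alpha_i)) < 0$, for every $\luka$-valuation $v$; rescaling $\sigma$ to rational entries (which preserves the strict inequality on the compact set $W$ by a small perturbation argument) gives a $\luka$-Dutch book, so $\Sigma$ is not $\luka$-coherent.

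The main obstacle I anticipate is the rationality requirement on the stakes: the definition of a $\luka$-Dutch book insists $\sigma_i \in \mathbb{Q}$, whereas the separating hyperplane theorem a priori delivers only real coefficients. This is resolved by noting that the strict inequality $\sup_{v} \sigma \cdot \mathbf{v} < \sigma \cdot q$ — equivalently $\max_{w \in W}\,\sigma\cdot(w - q) < 0$ — is a strict inequality over a compact set, hence stable under a sufficiently small perturbation of $\sigma$; since $\mathbb{Q}^k$ is dense in $\mathbb{R}^k$ we may replace $\sigma$ by a rational approximation without destroying the inequality. A secondary point requiring care is the compactness of $W$ (and hence of $K$): this rests on the continuity of the $\luka$ semantics, which is piecewise-linear and in particular continuous on $[0,1]^n$, as recalled in Section~\ref{sec:luka}; it is precisely here that the argument uses that we are in $\luka$ rather than an arbitrary many-valued logic. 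Everything else is the standard de Finetti separation argument transported verbatim, with Proposition~\ref{pr:LIPsmall} supplying the finite witness needed to land back in the algebraic form~\eqref{eq:LIPrestrictions}.
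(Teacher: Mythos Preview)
The paper does not supply its own proof of this proposition: it is stated with attribution to Mundici~(2006) and no proof environment follows. So there is nothing in the paper to compare your argument against line by line.

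That said, your proposal is the standard route and is essentially how the result is proved in the cited source: pass to the compact set $W\subseteq[0,1]^k$ of evaluation vectors (compactness coming from continuity of the \luka{} semantics on the $n$-cube), identify satisfiability with membership of $q$ in $K=\mathrm{conv}(W)$, get (b)$\Rightarrow$(a) by averaging the bookmaker's balance against the witnessing convex combination, and get $\lnot$(b)$\Rightarrow\lnot$(a) by separating $q$ from $K$ with a hyperplane whose normal furnishes the Dutch-book stakes. The appeal to Proposition~\ref{pr:LIPsmall} to land back in the finite form~\eqref{eq:LIPrestrictions} is exactly right, and your handling of the rationality constraint on stakes via density of $\mathbb{Q}^k$ and stability of a strict inequality on a compact set is the correct patch.

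One small slip to clean up: in the contrapositive paragraph you separate with $\sigma\cdot q < c \leq \sigma\cdot w$, hence $\sigma\cdot q < \sigma\cdot w$ for all $w\in W$; but in the following paragraph you write ``$\sup_v \sigma\cdot\mathbf{v} < \sigma\cdot q$'' and ``$\max_{w\in W}\sigma\cdot(w-q)<0$'', which is the reverse inequality. Either orientation works (replace $\sigma$ by $-\sigma$), but keep it consistent; as written the perturbation argument is being applied to the wrong side.
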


It has been shown~\cite{BF2010} that the decision problem \luka-coherent LIP-assignments is NP-complete. So, in the case of \LIP Logic, to decide if a LIP assignment is \luka-coherent, we can again employ linear algebraic methods to solve it.  In fact, NP-completeness of LIP satisfiability can be seen as a direct corollary of Proposition~\ref{pr:LIPsmall}.  As Proposition~\ref{pr:lukacoherent} asserts that deciding \luka-coherence is the same as determining LIP assignment satisfiability, we refer to this problem as LIPSAT.

\subsection{Applying Quantitative Logic Reasoning Methods to LIPSAT}
\label{sec:applyToLIPSAT}

Based on the Quantitative Logic Reasoning approach employed in Sections~\ref{sec:pl}~and~\ref{sec:cqu}, a possible strategy to solve the LIPSAT problem is as follows.

\begin{enumerate}
	\item Generate a normal form for LIPSAT instances.
	\item Provide an algebraic formulation for a normal form LIPSAT.
	\item Develop a column generation algorithm based on the algebraic formulation.
	\item Implement the algorithm and investigate important empirical properties.
\end{enumerate}

Here we present a development of the first two items.  The last two items are currently under progress.

\subsection{Algebraic Methods for LIPSAT}
\label{sec:lipsat}

In total analogy to PSAT, define a LIPSAT instance as in \emph{(propositional) normal form} if it can be partitioned in two sets, $\tuple{\Gamma,\Psi}$, where $\Gamma = \{C(\gamma_i) = 1 | 1 \leq i \leq r\}$ and $\Psi = \{C(a_i) = q_i | a_i \textrm{ is a propositional symbol, } 1 \leq i \leq k\}$, with $0 < q_i < 1$. The partition $\Gamma$ is the satisfiable part of the normal form, usually represented only as a set of propositional formulas and $\Psi$ is the \emph{propositional LIP assignment} part. Given a LIP-assignment $\Sigma$, it is immediate that there exists a normal form LIPSAT instance $\tuple{\Gamma,\Psi}$ that is LIP-satisfiable iff $\Sigma$ is.


\begin{example}\label{ex:drunksalive}\rm
	Reconsider Example~\ref{ex:psat1} about a doctor who formulates a hypothesis on the need of at least two out of three genes $g_1, g_2, g_3$ to be active for the disease $D$ to occur.  In the classical probabilistic case, it was shown that this hypothesis was inconsistent with the fact that each gene was present in 60\% of $D$-patients. 
	
	However, if we model this problem in \LIP-logic, which allows for ``partial truths'', the hypothesis no longer contradicts the data.  In fact, we can have a formulation of the problem directly in normal form, with $\Gamma = 	\{x_1 \oplus x_2, x_1 \oplus x_3, x_2 \oplus x_3\}$ and $\Psi = \{C(x_1) = C(x_2) = C(x_3) = 0.6\}$. 
	
	This LIP assignment has many satisfying pairs of valuations and convex combination. The simplest one contains just one valuation $v_1$ such that $v_1(x_1) = v_1(x_2)= v_1(x_3) = 0.6$ and $\lambda_1 = 1$. It is immediate that $v_1$ satisfies all three formulas in $\Gamma$ and $\lambda_1 v_1$ verifies all three equalities in $\Psi$. \qed
\end{example}

The algebraic formalization of LIPSAT~\eqref{eq:LIPrestrictions} when the input LIP assignment is in normal form yields the interesting property that the columns of matrix $A$ can be extended to $\Gamma$-satisfying valuations, that is, there is a valuation $v$ over all propositional symbols in $\Gamma$ such that $v$ satisfies all formulas in $\Gamma$ and when $v$ is restricted to the symbols $a_1, \ldots, a_k$ in $\Psi$, it agrees with the respective values in $A$'s column.   

This property is used to propose a linear program that allows us to decide the LIP satisfiability of a given LIP assignment. The linear program solves ~\eqref{eq:LIPrestrictions} without explicitly representing the large matrix $A$, using once again a~\emph{column generation} method.  For that, consider the following linear program:

\begin{align} \label{eq:LIPprog}
\begin{array}{ll}
	\min 				& c' \cdot \lambda \\
	\mbox{subject to} 	& A \cdot \lambda=q \\
						& \textrm{$A$'s columns are $a_1, \ldots, a_k$ \luka-valuations } \\
						& \lambda\geq 0 \mathrm{~and~} \sum \lambda_i = 1
\end{array}
\end{align}

As in Section~\ref{sec:psatsolve}, the \emph{cost vector} $c$ in \eqref{eq:LIPprog} is a $\{0,1\}$-vector such that $c_i = 1$ iff column $A^j$ is $\Gamma$-unsatisfying.  Thus, the column generation process proceeds by generating $\Gamma$-consistent columns.  The result of this minimization  process reaches total cost $c' \cdot \pi =0$ iff the input instance is satisfiable, as stated by the following result.

\begin{theorem}\label{th:lipsatsolver}
	A normal form LIPSAT instance $\Sigma$ is LIP-satisfiable iff the corresponding linear program of the form~\eqref{eq:LIPprog} terminates with minimal total cost $c' \cdot \lambda=0$.
\end{theorem}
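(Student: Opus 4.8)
The plan is to establish the two directions of the biconditional, mirroring the structure of the PSAT proof. The linear program \eqref{eq:LIPprog} has a feasible region consisting of those $\lambda \geq 0$ with $\sum \lambda_i = 1$ and $A\lambda = q$, where $A$ ranges over matrices whose columns are arbitrary $a_1,\ldots,a_k$ \luka-valuations (i.e.\ points of the $k$-cube reachable as restrictions of valuations over all propositional symbols occurring in $\Gamma$). Since the cost vector $c$ charges exactly $1$ for each $\Gamma$-unsatisfying column and $0$ for each $\Gamma$-satisfying column, the objective $c'\cdot\lambda$ equals the total weight $\lambda$ places on $\Gamma$-violating valuations. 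Hence the minimum is $0$ precisely when there is a feasible $\lambda$ supported entirely on $\Gamma$-satisfying columns.

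For the forward direction, suppose $\Sigma = \tuple{\Gamma,\Psi}$ is LIP-satisfiable. Then there is a convex combination $C = \sum_j \lambda_j v_j$ of \luka-valuations with $C(\gamma_i) = 1$ for all $\gamma_i \in \Gamma$ and $C(a_i) = q_i$ for all $a_i \in \Psi$. Because $v_j(\gamma_i) \leq 1$ always and a convex combination of these values equals $1$, every $v_j$ in the support of $\lambda$ must satisfy $v_j(\gamma_i) = 1$ for each $i$; that is, each such $v_j$ is $\Gamma$-satisfying. Collecting the restrictions of these $v_j$ to $a_1,\ldots,a_k$ as the columns of $A$, we get a feasible point of \eqref{eq:LIPprog} with all columns $\Gamma$-satisfying, so $c'\cdot\lambda = 0$; by Proposition~\ref{pr:LIPsmall} we may even take at most $k+1$ columns. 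Since $0$ is a lower bound on the nonnegative objective, the program terminates with minimum $0$.

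Conversely, if the program of the form \eqref{eq:LIPprog} terminates with $c'\cdot\lambda = 0$, then the optimal $\lambda$ is supported on columns $A^j$ that are $\Gamma$-satisfying. Each such column extends to a \luka-valuation $v_j$ over all propositional symbols in $\Gamma$ with $v_j(\gamma_i) = 1$ for all $i$ and $v_j$ agreeing with $A^j$ on $a_1,\ldots,a_k$. Setting $C = \sum_j \lambda_j v_j$, the constraint $A\lambda = q$ gives $C(a_i) = q_i$, and $v_j(\gamma_i)=1$ for all $j$ in the support gives $C(\gamma_i) = 1$; thus $C$ witnesses LIP-satisfiability of $\tuple{\Gamma,\Psi}$, hence of $\Sigma$.

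The main obstacle — and the step deserving care — is the claim that the column generation procedure actually drives the objective to its true minimum, i.e.\ that \eqref{eq:LIPprog} is correctly solved by iteratively generating $\Gamma$-satisfying columns with negative reduced cost. This requires arguing that the reduced-cost test $z'\cdot y \geq 0$ can be encoded as a satisfiability condition over \luka-valuations (so that a column-generating oracle exists), and that failure of column generation certifies optimality of the current basis; this is the \luka-analogue of the PSAT argument, but here the columns are continuous points of the $k$-cube rather than $\{0,1\}$-vectors, so the encoding of the reduced-cost inequality must be carried out in \luka\ logic itself. The finiteness/termination guarantee then rides on Proposition~\ref{pr:LIPsmall}, exactly as Proposition~\ref{fact:NP} underwrites the PSAT case, together with the standard fact that the simplex-based column generation loop strictly decreases (or halts at) the objective at each step.
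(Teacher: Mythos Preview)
Your proposal is correct and follows essentially the same two-direction argument as the paper, invoking Proposition~\ref{pr:LIPsmall} for the existence of a small zero-cost solution and reading off satisfiability from a zero-cost optimum. Your version is in fact more careful than the paper's terse proof: you make explicit the key convexity observation that $C(\gamma_i)=\sum_j\lambda_j v_j(\gamma_i)=1$ with each $v_j(\gamma_i)\le 1$ forces $v_j(\gamma_i)=1$ on the support of $\lambda$, which is exactly what pins the optimal $\lambda$ to $\Gamma$-satisfying columns; the paper leaves this implicit.
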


\begin{proof}
	$(\Leftarrow)$ If the program terminates, then clearly $\lambda$ is a convex combination of the columns of $A$ verifying the restriction in $\Sigma$.
	
	$(\Leftarrow)$ If $\Sigma$ is satisfiable, then by Proposition~\ref{pr:LIPsmall} there exists a small $k$-dimension matrix $A$ and a $\lambda$ that verifies its restrictions.  Note that $\lambda$ can be seen as a linear combination of the columns of $A$, which are \luka-valuations by \eqref{eq:LIPprog}; furthermore, $\sum \lambda_i = 1$, so $\lambda$ is a convex combination of \luka-valuations.  As column generation is able to eventually generate cost-decreasing columns, the total cost will reach 0, at which point the program terminates.
\end{proof}

\begin{corollary}[LIPSAT Complexity] \label{th:np}
	The problem of deciding the satisfiability of a LIP-assignment is NP-complete.
\end{corollary}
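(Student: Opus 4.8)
The plan is to establish the two directions of NP-completeness separately, leveraging the machinery already developed. For membership in NP, the key tool is Proposition~\ref{pr:LIPsmall}: if a LIP assignment $\Sigma = \{C(\alpha_i) = q_i \mid 1 \leq i \leq k\}$ is satisfiable, then there is a witnessing solution consisting of at most $k+1$ \luka-valuations $v_1, \ldots, v_{k+1}$ together with convex coefficients $\lambda_1, \ldots, \lambda_{k+1}$ satisfying the system \eqref{eq:LIPrestrictions} restricted to a $(k+1)\times(k+1)$ submatrix. The certificate is therefore this small collection of valuations and coefficients. First I would argue that such a certificate has polynomial size: a \luka-valuation is determined by its rational values on the $n$ propositional symbols, and one must check that these rationals can be taken with polynomially-bounded bit-length. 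This follows because, once the $k+1$ valuations are fixed, the $\lambda_i$ are the solution of a linear system with entries that are themselves values of piecewise-linear terms, so by Cramer's rule they are rationals of polynomial bit-size; and the valuations themselves can be taken to be vertices of the rational polytopes cut out by the linearity regions of the formulas $\alpha_i$, hence again of bounded denominator. Verification then amounts to: (i) computing $v_j(\alpha_i)$ for each $i,j$, which is polynomial since \luka-formula evaluation is polynomial-time in the formula size and the bit-length of the inputs; and (ii) checking the linear identities $A\lambda = q$, $\sum \lambda_j = 1$, $\lambda \geq 0$. This places LIPSAT in NP.

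For NP-hardness, the plan is to reduce from classical propositional SAT (equivalently, to observe that \luka{} satisfiability is already NP-hard, a known fact that also follows from the embedding below). Given a propositional CNF formula $\gamma$ over variables $p_1, \ldots, p_n$, form the LIP assignment consisting of the single equation $C(\gamma) = 1$. If $\gamma$ is classically satisfiable, take a classical (hence $\{0,1\}$-valued) valuation $v$ satisfying it; since on $\{0,1\}$ inputs the \luka{} connectives agree with the classical ones, $v(\gamma) = 1$, and the trivial convex combination $C = v$ witnesses satisfiability. Conversely, if the LIP assignment is satisfiable by $C = \sum \lambda_j v_j$ with $C(\gamma) = 1$, then since each $v_j(\gamma) \in [0,1]$ and the $\lambda_j$ are nonnegative with sum $1$, we must have $v_j(\gamma) = 1$ for every $j$ with $\lambda_j > 0$; picking one such $v_j$ and rounding each coordinate to a nearest value in $\{0,1\}$ that respects the linearity cell — or more simply, using the standard fact that a \luka{}-satisfiable CNF is classically satisfiable — produces a classical model of $\gamma$. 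This is a polynomial-time reduction, so LIPSAT is NP-hard. Combining the two parts gives NP-completeness.

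The main obstacle I anticipate is the bit-length bound in the NP-membership argument: one must be careful that the ``small'' solution guaranteed by Proposition~\ref{pr:LIPsmall} can actually be encoded in polynomially many bits, which requires knowing that the relevant valuations lie at vertices of polyhedra defined by the linearity regions of the $\alpha_i$ (whose defining inequalities have small coefficients), and then invoking a standard estimate on the size of vertex coordinates of such polyhedra. This is exactly the kind of bound used in~\cite{BF2010} to show NP-completeness of \luka-coherence, and indeed the cleanest route is simply to cite that result: the corollary is stated as following directly from Proposition~\ref{pr:LIPsmall} together with the already-quoted fact from~\cite{BF2010} that deciding \luka-coherence is NP-complete, which by Proposition~\ref{pr:lukacoherent} is the same problem as LIPSAT. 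Thus the shortest honest proof is to note that NP-hardness is inherited from classical SAT via the $\{0,1\}$-embedding, and NP-membership is the content of Proposition~\ref{pr:LIPsmall} sharpened by the bit-length analysis of~\cite{BF2010}.
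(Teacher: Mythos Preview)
Your proposal is correct and aligns with the paper's treatment: the paper gives no explicit proof for this corollary, but the surrounding text makes clear that NP-membership is meant to follow from Proposition~\ref{pr:LIPsmall} (small witness) together with the bit-size analysis of~\cite{BF2010}, and that by Proposition~\ref{pr:lukacoherent} the problem coincides with \luka-coherence, already shown NP-complete in~\cite{BF2010}. Your final paragraph identifies exactly this shortest route, so you and the paper are in agreement; your longer NP-hardness argument via the $\{0,1\}$-embedding is a reasonable self-contained alternative, though you should be explicit that the reduction uses the lattice connectives $\land,\lor$ (so that $v(\gamma)=1$ forces a witnessing literal in each clause to take a Boolean value), since with $\oplus$ a clause like $p\oplus\lnot p$ is identically~$1$ and the reduction would collapse.
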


%

Despite the fact that solvable linear programs of the form~\eqref{eq:LIPprog} always have polynomial size solutions, with respect to the size of the corresponding normal form LIP-assignment, the elements of linear program itself ~\eqref{eq:LIPprog} may be exponentially large, rendering the explicit representation of matrix $A$ impractical.  

Theorem~\ref{th:lipsatsolver} serves as a basis for the development of a LIPSAT-solver and its implementation.

\subsection{A LIPSAT-solving Algorithm}
\label{sec:applyToLIPSAT}

The general strategy employed here is similar to that employed to PSAT solving~\cite{FDB2011,DF2015b}, but the column generation algorithm is considerably distinct and requires an extension of \luka-decision procedure.

From the input $\tuple{\Gamma,\Theta}$, we implicitly deal with matrix $A$ and explicitly obtain the vector of probabilities $q$ mentioned in \eqref{eq:LIPprog}. The basic idea of the simplex algorithm is to move from one feasible solution to another one with a decreasing cost. The pair $\tuple{B,\lambda}$ consisting of the basis $B$ and a LIP probability distribution $\lambda$ is a \emph{feasible solution} if $B \cdot \lambda=q$ and $\lambda\geq 0$. We assume that $q_{k+1} = 1$ such that the last line of $B$ forces $\sum_G \lambda_j = 1$, where $G$ is the set of $B$ columns that are $\Gamma$-satisfiable. Each step of the algorithm replaces one column of the feasible solution $\tuple{B^{(s-1)},\lambda^{(s-1)}}$ at step $s-1$ obtaining a new one, $\tuple{B^{(s)},\lambda^{(s)}}$. The cost vector $c^{(s)}$ is a $\{0,1\}$-vector such that $c^{(s)}_j = 1$ iff $B_j$ is $\Gamma$-unsatisfiable. The column generation and substitution is designed such that the total cost is never increasing, that is $c^{(s)}{}' \cdot \lambda^{(s)} \leq c^{(s-1)}{}' \cdot \lambda^{(s-1)}$.

Algorithm \ref{alg:LIPSATviaLSAT} presents the top level LIPSAT decision procedure. Lines \ref{line:iniini}--\ref{line:iniend} present the initialization of the algorithm. We assume the vector $q$ is in ascending order. Let the $D_{k+1}$ be a $k+1$ square matrix in which the elements on the diagonal and below are $1$ and all the others are $0$. At the initial step we make $B^{(0)} = D_{k+1}$, this forces $\lambda^{(0)}_1 = q_1 \geq 0$, $ \lambda^{(0)}_{j+1} = q_{j+1} -q_j \geq 0, 1 \leq j \leq k$; and $c^{(0)} = [c_1 \cdots c_{k+1}]'$, where $c_k=0$ if column $j$ in $B^{(0)}$ is $\Gamma$-satisfiable; otherwise $c_j=1$. Thus the initial state $s=0$ is a feasible solution.

\begin{algorithm}
	\caption{LIPSAT-CG: a LIPSAT solver via Column Generation\label{alg:LIPSATviaLSAT}}
	\textbf{Input:} A normal form LIPSAT instance
	$\tuple{\Gamma,\Theta}$.
	
	\textbf{Output:} No, if $\tuple{\Gamma,\Theta}$ is unsatisfiable.  Or a solution
	$\tuple{B,\lambda}$ that minimizes \eqref{eq:LIPprog}.
	
	\begin{algorithmic}[1]
		\STATE $q := [\{q_i ~|~ C(p_i)=q_i \in \Theta, 1 \leq i \leq k\} \cup \{1\}]$ in ascending order; \label{line:iniini}
		\STATE $B^{(0)} := D_{k+1};$ \label{lin:ini}
		\STATE $s := 0$, $\lambda^{(s)} = (B^{(0)})^{-1} \cdot q$ and $c^{(s)} = [c_1 \cdots c_{k+1}]';$ \label{line:iniend}
		\WHILE{$c^{(s)}{}' \cdot \lambda^{(s)} \neq 0$} 
		\label{lin:loop}
		\STATE $y^{(s)} = \mathit{GenerateColumn}(B^{(s)},\Gamma,c^{(s)});$ \label{lin:cond}
		\IF{$y^{(s)}$ column generation failed} 
		\RETURN No;~~ \label{lin:fail}  \COMMENT{LIPSAT instance is
			unsatisfiable}
		\ELSE
		\STATE $B^{(s+1)} = \mathit{merge}(B^{(s)}, b^{(s)})$ \label{lin:merge}
		\STATE $s\!\!+\!\!+$, recompute $\lambda^{(s)}$ and $c^{(s)};$
		\ENDIF
		\ENDWHILE\label{lin:endloop}
		\RETURN $\tuple{B^{(s)},\lambda{(s)}}$;~~  \COMMENT{LIPSAT instance is satisfiable} \label{lin:end}
	\end{algorithmic}
\end{algorithm}

Algorithm \ref{alg:LIPSATviaLSAT} main loop covers lines \ref{lin:cond}--\ref{lin:endloop} which contain the column generation strategy, detailed bellow. If column generation fails the process ends with failure in line \ref{lin:fail}. Otherwise a column is removed and the generated one is inserted in a process called \textit{merge} at line \ref{lin:merge}. The loop ends successfully when the objective function (total cost) $c^{(s)}{}' \cdot \lambda^{(s)}$ reaches zero and the algorithm outputs a probability distribution $\lambda$ and the set of $\Gamma$-satisfiable columns in $B$, at line \ref{lin:end}. 

The procedure \textit{merge} is part of the simplex method which guarantees that given a $k+1$ column $y$ and a feasible solution $\tuple{B,\lambda}$ there always exists a column $j$ in $B$ such that if  $B[j:=y]$ is obtained from $B$ by replacing column $j$ with $y$, then there is $\lambda'$ such that $\tuple{B[j:=y],\lambda'}$ is a feasible solution.

Column generation method takes as input the current basis $B$, the current cost $c$, and the \luka{} restrictions $\Gamma$; the output is a column $y$, if it exists, otherwise it signals \textbf{No}. The basic idea for column generation is the property of the simplex algorithm called the \emph{reduced cost} of inserting a column $y$ with cost $c_y$ in the basis. The reduced cost $r_y$ is given by 
\begin{align}
r_y = c_y - c'B^{-1}y
\end{align}
the objective function is non increasing if $r_y \leq 0$. The generation method always produces a column $y$ thay is $\Gamma$-satisfiable so $c_y = 0$. We thus obtain
\begin{align} \label{eq:rcost}
c'B^{-1}y \geq 0
\end{align}
which is an inequality on the elements of $y$. To force $\lambda$ to be a convex combination, we make $y_{k+1} = 1$, the remaining elements $y_i$ are valuations of the variables in $\Theta$, so that we are searching for solution to \eqref{eq:rcost} such that $0 \leq y_i \leq 1, 1 \leq i \leq k$. To finally obtain column $y$ we must extend a $\luka$-solver that generates valuations satisfying $\Gamma$ so that it also respects the linear restriction \eqref{eq:rcost}. In fact this is not an expressiveness extension of \luka{} as the McNaughton property guarantees that \eqref{eq:rcost} is equivalent to some \luka-formula on variables $y_1, \ldots, y_k$~\cite{CDM2000}. The practical details on how this can be implemented is detailed in~\cite{FP2018-prep}, which also details this final result.

\begin{theorem}
	Consider the output of Algorithm \ref{alg:LIPSATviaLSAT} with normal form input $\tuple{\Gamma,\Theta}$. If the algorithm succeeds with solution $\tuple{B,\lambda}$, then the input problem is satisfiable with distribution $\lambda$ over the valuations which are columns of $B$. If the program outputs no, then the input problem is unsatisfiable. Furthermore, there are column selection strategies that guarantee termination.
\end{theorem}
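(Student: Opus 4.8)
The plan is to mirror the correctness argument given for PSAT (Algorithm~\ref{algo:PSATviaSAT}) and for the CQUEL branch-and-bound, adapting it to the fact that the column-generation inner loop now produces \luka-valuations rather than classical $\{0,1\}$-columns. The key structural fact I would lean on is Theorem~\ref{th:lipsatsolver}, which already certifies that $\Sigma$ is LIP-satisfiable iff the linear program~\eqref{eq:LIPprog} attains minimal total cost $0$. So the remaining content of this theorem is purely operational: (i) every iteration of Algorithm~\ref{alg:LIPSATviaLSAT} maintains the invariant that $\tuple{B^{(s)},\lambda^{(s)}}$ is a feasible solution of the relaxed system $B\lambda = q$, $\lambda \geq 0$, $\sum\lambda_i = 1$; (ii) the total cost $c^{(s)}{}'\cdot\lambda^{(s)}$ is non-increasing along iterations; (iii) a failure of \textit{GenerateColumn} really does certify that no cost-decreasing column exists, hence by LP duality the current cost is the LP optimum, which by Theorem~\ref{th:lipsatsolver} is positive iff $\Sigma$ is unsatisfiable; and (iv) there is a column-selection (pivoting) rule under which the process terminates.

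First I would establish the feasibility invariant and the cost-monotonicity. The initialization in lines~\ref{line:iniini}--\ref{line:iniend} with $B^{(0)} = D_{k+1}$ gives, after sorting $q$ ascending, $\lambda^{(0)}_1 = q_1 \geq 0$ and $\lambda^{(0)}_{j+1} = q_{j+1} - q_j \geq 0$, and the bottom row of $D_{k+1}$ being all $1$'s forces $\sum\lambda_j = 1$; so $s=0$ is feasible. For the inductive step, I invoke the standard property of the simplex \textit{merge}/pivot operation (cited in the paper): given a feasible $\tuple{B,\lambda}$ and an incoming column $y$ with $y_{k+1}=1$, there is an outgoing column $j$ and a $\lambda'$ with $\tuple{B[j{:=}y],\lambda'}$ feasible and cost no larger — here using that the incoming column is $\Gamma$-satisfiable so $c_y = 0$, and that the reduced-cost condition~\eqref{eq:rcost}, namely $c'B^{-1}y \geq 0$, is exactly what is imposed on $y$ by the column generator. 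This gives $c^{(s+1)}{}'\cdot\lambda^{(s+1)} \leq c^{(s)}{}'\cdot\lambda^{(s)}$.

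Next I would argue the soundness of the \textbf{No} output. When \textit{GenerateColumn} fails, it has searched — via the extended \luka-solver on formula~\eqref{eq:rcost} conjoined with $\Gamma$, which by the McNaughton property is expressible as a genuine \luka-formula on $y_1,\dots,y_k$ — for \emph{any} \luka-valuation of the $\Psi$-symbols that both satisfies $\Gamma$ and has strictly-improving (here, nonnegative per~\eqref{eq:rcost}) reduced cost, and found none. By the theory of the simplex method this means no nonbasic column can enter to strictly decrease $c'\cdot\lambda$, i.e.\ the current vertex is optimal for the relaxed LP~\eqref{eq:LIPprog}. If that optimum is still positive, Theorem~\ref{th:lipsatsolver} says $\Sigma$ is unsatisfiable, so returning \textbf{No} is correct; conversely, if the loop exits normally with cost $0$, then $\lambda^{(s)}$ is a convex combination (it sums to $1$ and is nonnegative) of columns of $B^{(s)}$ that are \luka-valuations satisfying $\Gamma$ and reproducing $q$ on $\Psi$, i.e.\ a witness of LIP-satisfiability by Theorem~\ref{th:lipsatsolver}. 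Finally, for termination I would appeal to an anti-cycling pivot rule (Bland's rule, or lexicographic perturbation) exactly as in ordinary revised simplex: since there are only finitely many bases among the (implicitly represented, but still finitely many up to the polynomial size bound from Proposition~\ref{pr:LIPsmall}) \luka-valuation columns, a rule that forbids revisiting a basis forces the process to halt.

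The main obstacle is step~(iii): making precise that a \emph{failure} of the \luka-based column generator is equivalent to \emph{global} optimality of the relaxed LP, despite the column set being an infinite (continuum) family of \luka-valuations rather than the finite set of $\{0,1\}$-columns in the PSAT case. The delicate point is that pricing now ranges over a polytope of fractional columns; one must verify that the extended \luka-decision procedure genuinely decides the feasibility of ``$\exists$ a $\Gamma$-satisfying \luka-valuation $y$ with $c'B^{-1}y \geq 0$'' — this is where the McNaughton representability of the linear constraint~\eqref{eq:rcost} as a \luka-formula is essential, and where I would cite the implementation details in~\cite{FP2018-prep}. Everything else is a routine transcription of the simplex/column-generation correctness proof already used for PSAT.
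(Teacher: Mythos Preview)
The paper does not actually prove this theorem in-text; immediately before the statement it says the practical details and ``this final result'' are deferred to~\cite{FP2018-prep}. So there is no in-paper argument to compare your proposal against, and the shape you have chosen --- feasibility invariant, cost monotonicity via reduced cost~\eqref{eq:rcost}, LP-optimality certificate for the \textbf{No} branch via Theorem~\ref{th:lipsatsolver}, and an anti-cycling rule for termination --- is the natural one and matches how the paper handles the PSAT analogue.

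There is, however, a genuine gap in your termination step. You write that ``there are only finitely many bases among the (implicitly represented, but still finitely many up to the polynomial size bound from Proposition~\ref{pr:LIPsmall}) \luka-valuation columns.'' This misreads Carath\'eodory. Proposition~\ref{pr:LIPsmall} says that \emph{if} the system is solvable then some witness uses at most $k+1$ columns; it says nothing about the cardinality of the set of admissible columns. The set of $\Gamma$-satisfying \luka-valuations is a union of full-dimensional rational polytopes in $[0,1]^k$ (McNaughton), hence a continuum --- a fact you yourself stress in the ``main obstacle'' paragraph and then contradict two sentences later. With uncountably many potential entering columns, Bland's rule and the ``finitely many bases'' argument do not apply as stated, and nothing prevents the generator from returning a fresh interior point of the same face at every iteration with reduced cost exactly zero.

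To close the gap you need an additional structural observation: a simplex- or MILP-based \luka-solver, when asked to satisfy $\Gamma$ together with the linear constraint~\eqref{eq:rcost}, can be made to return a \emph{vertex} of the McNaughton polytope of $\Gamma$-models (intersected with the half-space~\eqref{eq:rcost}); that polytope is rational with finitely many vertices, so the pool of columns the generator can ever emit is finite, and only then does an anti-cycling pivot rule guarantee termination. This is presumably the content hidden in~\cite{FP2018-prep}, and it is the missing idea in your argument.
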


\section{Conclusion}
\label{sec:conc}

In this paper we have brought out similarities between three decisions problems in probabilistic logics and counting. The problems deal with satisfiability decision and employ similar linear algebraic methods, fine-tuned for the needs of each specific logic problem.  In this way, we belerve that we have elicited grouping them in a class which we named quantitative-logic systems.

There are several other topics were not covered in this work which pertain to all those quantitative logics dealt by this work.  Among such topics is the existence of inconsistency measurements, which have been developed for classical probabilistic theories, but not for the other systems.  Also, the existence of a phase transition for implementations of the decision procedures described here have been described, but such topic has an empirical nature and thus remains outside the scope of this article.

\subsubsection*{Acknowledgements}
This work was supported by Fapesp projects 2015/21880-4 and 2014/12236-1 and CNPq grant PQ 303609/2018-4.  This study was financed in part by the Coordena\c{c}\~{a}o de Aperfei\c{c}oamento de Pessoal de N\'{i}vel Superior -- Brasil (CAPES) -- Finance Code 001.

\bibliographystyle{chicago}
\bibliography{mf}
\end{document}